\tikzstyle{every path}=[thick]
\long\def\Omit#1{}
\def\diss{{\delta}}
\def\totdiss{\textit{diss}}
\newtheorem{theorem}{Theorem}
\newtheorem{lemma}[theorem]{Lemma}
\newtheorem{observation}[theorem]{Observation}
\newcounter{claimCount}
\theoremstyle{remark}
\newtheorem*{ccproof}{Proof}
\newcolumntype{\expand}{}
\long\@namedef{NC@rewrite@\string\expand}{\expandafter\NC@find}
	\def\problem@tablelayout{>{\bfseries}lXc}	\def\problem@title{\multicolumn{2}{l}{			\raisebox{-\fboxsep}{\textsc{\large #2}}	}}	\fi	\bigskip\par\noindent	 \begin{tabularx}{\textwidth}{\expand\problem@tablelayout}		\problem@hline		\problem@title\\[2\fboxsep]		\BODY\\\problem@hline	\end{tabularx}	\medskip\par}
\colorlet{darkgreen}{green!60!black}
\def\real{\hbox{\rm\vrule\kern-1pt R}}
\def\nat{\hbox{\rm\vrule\kern-1pt N}}
\newcommand{\PP}{\textsf{P}}
\newcommand{\NP}{\textsf{NP}}
\newcommand{\MINMAX}{\textsc{CG Min-Max Dissatisfaction}}
\newcommand{\pred}{\textnormal{pred}}
\newcommand{\succc}{\textnormal{succ}}
\newcommand{\probdef}[6][Question]{
\hbox{\vbox{
  \medskip
  \ifthenelse{\equal{#5}{}}{}{\label{#6}}
  \noindent\ifthenelse{\equal{#4}{}}{}{\textsc{#4}\ifthenelse{\equal{#5}{}}{}{ \sloppy\mbox{(\textsc{#5})}}}
  \par
  \smallskip
  \noindent\begin{tabularx}{\textwidth}{@{}l X}
    \textbf{Input:}		& {#2}\\
    \textbf{#1:}		& {#3}
  \end{tabularx}
  \medskip
}}
}
\newcommand{\picturepathsamelength}{
 \begin{tikzpicture}
 \begin{scope}[scale=0.5]
\node[fill=white,draw,circle,inner sep=0pt,minimum size=2mm,label=left:{$1$}] (u1) at (0,0)   {};
\node[fill=white,draw,circle,inner sep=0pt,minimum size=2mm,label=left:{$2$}] (u2) at (0,-1)   {};
\node[fill=none,draw=none,circle,inner sep=0pt,minimum size=0mm,label=left:{$\vdots$}] (u3) at (0,-2)   {};
\node[fill=white,draw,circle,inner sep=0pt,minimum size=2mm,label=left:{$\frac{k-1}{2}$}] (u4) at (0,-3)   {};
\node[fill=white,draw,circle,inner sep=0pt,minimum size=2mm,label=left:{$\frac{k+1}{2}$}] (u5) at (0,-4)   {};
\node[fill=white,draw,circle,inner sep=0pt,minimum size=2mm,label=left:{$\frac{k+3}{2}$}] (u6) at (0,-5)   {};
\node[fill=none,draw=none,circle,inner sep=0pt,minimum size=0mm,label=left:{$\vdots$}] (u7) at (0,-6)   {};
\node[fill=white,draw,circle,inner sep=0pt,minimum size=2mm,label=left:{$k-1$}] (u8) at (0,-7)   {};
\node[fill=white,draw,circle,inner sep=0pt,minimum size=2mm,label=left:{$k$}] (u9) at (0,-8)   {};
 
\draw(u1) -- (u2);
\draw[dashed](u2) -- (u3);
\draw[dashed](u3) -- (u4);
\draw(u4) -- (u5);
\draw(u5) -- (u6);
\draw[dashed](u6) -- (u7);
\draw[dashed](u7) -- (u8);
\draw(u8) -- (u9);
\end{scope}
 
\begin{scope}[scale=0.5,xshift=3cm]
\node[fill=white,draw,circle,inner sep=0pt,minimum size=2mm,label=left:{$k-1$}] (u1) at (0,0)   {};
\node[fill=white,draw,circle,inner sep=0pt,minimum size=2mm,label=left:{$k-3$}] (u2) at (0,-1)   {};
\node[fill=none,draw=none,circle,inner sep=0pt,minimum size=0mm,label=left:{$\vdots$}] (u3) at (0,-2)   {};
\node[fill=white,draw,circle,inner sep=0pt,minimum size=2mm,label=left:{$2$}] (u4) at (0,-3)   {};
\node[fill=white,draw,circle,inner sep=0pt,minimum size=2mm,label=left:{$k$}] (u5) at (0,-4)   {};
\node[fill=white,draw,circle,inner sep=0pt,minimum size=2mm,label=left:{$k-2$}] (u6) at (0,-5)   {};
\node[fill=none,draw=none,circle,inner sep=0pt,minimum size=0mm,label=left:{$\vdots$}] (u7) at (0,-6)   {};
\node[fill=white,draw,circle,inner sep=0pt,minimum size=2mm,label=left:{$3$}] (u8) at (0,-7)   {};
\node[fill=white,draw,circle,inner sep=0pt,minimum size=2mm,label=left:{$1$}] (u9) at (0,-8)   {};
 
\draw(u1) -- (u2);
\draw[dashed](u2) -- (u3);
\draw[dashed](u3) -- (u4);
\draw(u4) -- (u5);
\draw(u5) -- (u6);
\draw[dashed](u6) -- (u7);
\draw[dashed](u7) -- (u8);
\draw(u8) -- (u9);
 \end{scope}
 
 \begin{scope}[scale=0.5,xshift=6cm]
\node[fill=white,draw,circle,inner sep=0pt,minimum size=2mm,label=left:{$k$}] (u1) at (0,0)   {};
\node[fill=white,draw,circle,inner sep=0pt,minimum size=2mm,label=left:{$k-2$}] (u2) at (0,-1)   {};
\node[fill=none,draw=none,circle,inner sep=0pt,minimum size=0mm,label=left:{$\vdots$}] (u3) at (0,-2)   {};
\node[fill=white,draw,circle,inner sep=0pt,minimum size=2mm,label=left:{$3$}] (u4) at (0,-3)   {};
\node[fill=white,draw,circle,inner sep=0pt,minimum size=2mm,label=left:{$1$}] (u5) at (0,-4)   {};
\node[fill=white,draw,circle,inner sep=0pt,minimum size=2mm,label=left:{$k-1$}] (u6) at (0,-5)   {};
\node[fill=none,draw=none,circle,inner sep=0pt,minimum size=0mm,label=left:{$\vdots$}] (u7) at (0,-6)   {};
\node[fill=white,draw,circle,inner sep=0pt,minimum size=2mm,label=left:{$4$}] (u8) at (0,-7)   {};
\node[fill=white,draw,circle,inner sep=0pt,minimum size=2mm,label=left:{$2$}] (u9) at (0,-8)   {};
 
\draw(u1) -- (u2);
\draw[dashed](u2) -- (u3);
\draw[dashed](u3) -- (u4);
\draw(u4) -- (u5);
\draw(u5) -- (u6);
\draw[dashed](u6) -- (u7);
\draw[dashed](u7) -- (u8);
\draw(u8) -- (u9);
 \end{scope}
 
 \begin{scope}[scale=0.5,xshift=12cm]
\node[fill=white,draw,circle,inner sep=0pt,minimum size=2mm,label=left:{$1$}] (u1) at (0,0)   {};
\node[fill=white,draw,circle,inner sep=0pt,minimum size=2mm,label=left:{$2$}] (u2) at (0,-1)   {};
\node[fill=none,draw=none,circle,inner sep=0pt,minimum size=0mm,label=left:{$\vdots$}] (u3) at (0,-2)   {};
\node[fill=white,draw,circle,inner sep=0pt,minimum size=2mm,label=left:{$\frac{k-2}{2}$}] (u4) at (0,-3)   {};
\node[fill=white,draw,circle,inner sep=0pt,minimum size=2mm,label=left:{$\frac{k}{2}$}] (u5) at (0,-4)   {};
\node[fill=white,draw,circle,inner sep=0pt,minimum size=2mm,label=left:{$\frac{k+2}{2}$}] (u6) at (0,-5)   {};
\node[fill=none,draw=none,circle,inner sep=0pt,minimum size=0mm,label=left:{$\vdots$}] (u7) at (0,-6)   {};
\node[fill=white,draw,circle,inner sep=0pt,minimum size=2mm,label=left:{$k-2$}] (u8) at (0,-7)   {};
\node[fill=white,draw,circle,inner sep=0pt,minimum size=2mm,label=left:{$k-1$}] (u9) at (0,-8)   {};
\node[fill=white,draw,circle,inner sep=0pt,minimum size=2mm,label=left:{$k$}] (u10) at (0,-9)   {};
 
\draw(u1) -- (u2);
\draw[dashed](u2) -- (u3);
\draw[dashed](u3) -- (u4);
\draw(u4) -- (u5);
\draw(u5) -- (u6);
\draw[dashed](u6) -- (u7);
\draw[dashed](u7) -- (u8);
\draw(u8) -- (u9);
\draw(u9) -- (u10);
\end{scope}

\begin{scope}[scale=0.5,xshift=15cm,yshift=-1cm]
\node[fill=white,draw,circle,inner sep=0pt,minimum size=2mm,label=left:{$k$}] (u0) at (0,1)   {};
\node[fill=white,draw,circle,inner sep=0pt,minimum size=2mm,label=left:{$k-2$}] (u1) at (0,0)   {};
\node[fill=white,draw,circle,inner sep=0pt,minimum size=2mm,label=left:{$k-4$}] (u2) at (0,-1)   {};
\node[fill=none,draw=none,circle,inner sep=0pt,minimum size=0mm,label=left:{$\vdots$}] (u3) at (0,-2)   {};
\node[fill=white,draw,circle,inner sep=0pt,minimum size=2mm,label=left:{$2$}] (u4) at (0,-3)   {};
\node[fill=white,draw,circle,inner sep=0pt,minimum size=2mm,label=left:{$k-1$}] (u5) at (0,-4)   {};
\node[fill=white,draw,circle,inner sep=0pt,minimum size=2mm,label=left:{$k-3$}] (u6) at (0,-5)   {};
\node[fill=none,draw=none,circle,inner sep=0pt,minimum size=0mm,label=left:{$\vdots$}] (u7) at (0,-6)   {};
\node[fill=white,draw,circle,inner sep=0pt,minimum size=2mm,label=left:{$3$}] (u8) at (0,-7)   {};
\node[fill=white,draw,circle,inner sep=0pt,minimum size=2mm,label=left:{$1$}] (u9) at (0,-8)   {};

\draw(u1) -- (u2);
\draw[dashed](u2) -- (u3);
\draw[dashed](u3) -- (u4);
\draw(u4) -- (u5);
\draw(u5) -- (u6);
\draw[dashed](u6) -- (u7);
\draw[dashed](u7) -- (u8);
\draw(u8) -- (u9);
\draw(u0) -- (u1);
 \end{scope}

 \begin{scope}[scale=0.5,xshift=18cm,yshift=-1cm]
 \node[fill=white,draw,circle,inner sep=0pt,minimum size=2mm,label=left:{$k$}] (u0) at (0,1)   {};
\node[fill=white,draw,circle,inner sep=0pt,minimum size=2mm,label=left:{$k-1$}] (u1) at (0,0)   {};
\node[fill=white,draw,circle,inner sep=0pt,minimum size=2mm,label=left:{$k-3$}] (u2) at (0,-1)   {};
\node[fill=none,draw=none,circle,inner sep=0pt,minimum size=0mm,label=left:{$\vdots$}] (u3) at (0,-2)   {};
\node[fill=white,draw,circle,inner sep=0pt,minimum size=2mm,label=left:{$3$}] (u4) at (0,-3)   {};
\node[fill=white,draw,circle,inner sep=0pt,minimum size=2mm,label=left:{$1$}] (u5) at (0,-4)   {};
\node[fill=white,draw,circle,inner sep=0pt,minimum size=2mm,label=left:{$k-2$}] (u6) at (0,-5)   {};
\node[fill=none,draw=none,circle,inner sep=0pt,minimum size=0mm,label=left:{$\vdots$}] (u7) at (0,-6)   {};
\node[fill=white,draw,circle,inner sep=0pt,minimum size=2mm,label=left:{$4$}] (u8) at (0,-7)   {};
\node[fill=white,draw,circle,inner sep=0pt,minimum size=2mm,label=left:{$2$}] (u9) at (0,-8)   {};

\draw(u1) -- (u2);
\draw[dashed](u2) -- (u3);
\draw[dashed](u3) -- (u4);
\draw(u4) -- (u5);
\draw(u5) -- (u6);
\draw[dashed](u6) -- (u7);
\draw[dashed](u7) -- (u8);
\draw(u8) -- (u9);
\draw(u0) -- (u1);
 \end{scope}

 \node[fill=none,draw=none,circle,inner sep=0pt,minimum size=2mm,label=center:{$k$ odd}] (u11) at (1.5,0.5)   {};
 \node[fill=none,draw=none,circle,inner sep=0pt,minimum size=2mm,label=center:{$k$ even}] (u12) at (7.5,0.5)   {};
 \end{tikzpicture}
}
\title{Minimizing Maximum Dissatisfaction in the Allocation of Indivisible Items under a Common Preference Graph}
\author[1,2]{Nina Chiarelli}
\author[3]{Cl\'ement Dallard}
\author[4]{Andreas Darmann}
\author[4]{Stefan Lendl}
\author[1,2]{Martin Milani\v c}
\author[1]{Peter Mur\v si\v c}
\author[4]{Ulrich Pferschy}
\affil[1]{FAMNIT, University of Primorska, Koper, Slovenia}
\affil[2]{IAM, University of Primorska, Koper, Slovenia}
\affil[3]{Department of Informatics, University of Fribourg, Fribourg, Switzerland}
\affil[4]{Department of Operations and Information Systems, University of Graz, Austria}
\date{}
\begin{document}

\maketitle

\begin{abstract}
    We consider the task of allocating indivisible items to agents, when the agents' preferences over the items are identical. 
    The preferences are captured by means of a directed acyclic graph, with vertices representing items and an edge $(a,b)$, meaning that each of the agents prefers item $a$ over item~$b$. 
    The dissatisfaction of an agent is measured by the number of items that the agent does not receive and for which it also does not receive any more preferred item.
    The aim is to allocate the items to the agents in a fair way, i.e., to minimize the maximum dissatisfaction among the agents. 
    We study the status of computational complexity of that problem and establish the following dichotomy:
    the problem is \NP-hard for the case of at least three agents, even on fairly restricted graphs, but polynomially solvable for two agents.
    We also provide several polynomial-time results with respect to different underlying graph structures, such as graphs of width at most two and tree-like structures such as stars and matchings. These findings are complemented with fixed parameter tractability results related to path modules and independent set modules. Techniques employed in the paper include bottleneck assignment problem,  greedy algorithm, dynamic programming, maximum network flow, and integer linear programming.
    
\bigskip
\noindent{\bf Keywords:} fair division, partial order, preference graph, dissatisfaction, polynomial time algorithm, computational complexity

\bigskip
\noindent{\bf MSC (2020):}  
91B32, 
90C27, 
90C47, 
68Q25, 
05C85, 
05C20, 
68Q27 
90B10 
06A06 
\end{abstract}
\maketitle

\section{Introduction}

Partitioning resources or allocating indivisible items to a set of agents is a widely studied research topic  in Operations Research and in particular in discrete optimization (see, e.g., \cite{aziz20,bouveret-survey,thomson,kilgour2018,cornilly2022}).
In many scenarios agents can voice their preferences over the set of available items.
Frequently, this is done by assigning a profit value to every item.
Under the commonly applied additivity assumption (e.g., 
\cite{lipton04,proc14}), the total utility or gain that an agent obtains from a certain subset of items is simply given by the sum of the corresponding profit values.

However, it is often pointed out that individuals would be overwhelmed with choosing meaningful numerical values for a larger set of items (cf.\ \cite{ALOYSIUS2006273,BANAECOSTA1994489}).
Moreover, cardinal evaluations entail certain drawbacks compared to ordinal evaluations as widely discussed in theoretical economics, see, e.g.,  \cite{hamm91}.
Instead, it would be easier to ask agents for pairwise comparisons and the resulting orders as in \cite{aziz}.
For highly heterogenous items a total order may also be hard to find since certain items could well be incomparable to each other.
Thus, one often settles for partial orders to represent preferences of agents. 
It is also fairly natural to consider agents discussing their opinions of items, in particular if these are not so well known, and reaching a consensus valuation expressed by means of a common preference graph. The setting of agents expressing common, i.e., identical, preferences has been considered in the literature on  fair division of indivisible items in several works, including the ones by Bouveret and Lang~\cite{bouveretlang2008}, Brams and Fishburn~\cite{brams2000}, and Freeman et al.~\cite{freeman}.

\medskip
In this contribution we consider a set of $k$ agents denoted by $K=\{1,\ldots,k\}$ and a set $V$ of $n$ items.
The aforementioned preference structure of the considered problem is described by a single {\em preference graph}, i.e., a directed acyclic graph~$G$.
This is the crucial difference of the current contribution from its predecessor paper \cite{general}, where each agent had its own preference graph.
Every item is identified with a vertex of~$G$.
An arc $(a,b)$ in $G$ means item $a$ is preferred over item~$b$. 
Assuming transitivity of the preferences, arcs $(a,b)$ and $(b,c)$ imply that the agents also prefer item $a$ over item $c$, regardless of whether the arc $(a,c)$ is contained in the graph or not. 
Obviously, $G$ induces a partial order over the set of items.

\medskip
The allocation of items is performed by a central decision maker who assigns every item to at most one agent. 
Now a crucial question arises:
What is the happiness or satisfaction of an agent receiving a certain subset of items under the given partial preference order?

We will consider a new measure for expressing the \emph{dissatisfaction} of an agent with its allocated subset of items
which was recently introduced in our earlier work~\cite{general}.
The underlying rationale states that an item received makes all lesser ranked items irrelevant. 
However, an item not assigned to a certain agent causes dissatisfaction if this agent does not receive any other more preferred item.
The overall dissatisfaction of an agent is therefore given by the total number of items causing dissatisfaction.

The global goal, then, would be a {\em fair} allocation. 
In that respect, in order to evaluate the quality of an allocation, we take into account a very natural and  basic fairness criterion: 
we aim at making the worst-off agent as good as possible, i.e., our problem
{\MINMAX} (see~\Cref{sec:prelim} for a precise definition) deals with minimizing the maximum dissatisfaction value of an agent for a common preference graph.
Thus, in principle, we are concerned with maximizing egalitarian social welfare on the basis of  the agents' preferences as considered, e.g., by Bez\'{a}kov\'{a} and Dani~\cite{bezakova2005}, Baumeister et al.~\cite{baumeister2017}, and Roos and Rothe~\cite{roosarothe}. 
Egalitarian social welfare as a fairness measure has a long-standing tradition (see Rawls~\cite{Rawls1971}), and has been applied in various settings (see, e.g., Golovin~\cite{Golovin05}, Kawase and Sumita~\cite{kawase22}, Darmann et al.~\cite{darmann2009} and Nguyen et al.~\cite{nguyen}).

From a graph theoretic perspective, an allocation for an agent $i \in K$ is evaluated by the number of vertices in $G=(V,A)$ which are {\em dominated} by the allocated items, i.e., items which can be reached from an allocated vertex by a directed path in~$G$. 
The total number of these dominated items, together with the allocated items, can be seen as the {\em satisfaction} level of agent~$i$.
Obviously, for each agent satisfaction and dissatisfaction add up to the number of vertices in~$G$. 

\subsection{Our contribution}

In Section~\ref{sec:general}, we prove \NP-hardness of {\MINMAX} for any constant number of agents $k \geq 3$. 
This hardness result even holds for the relatively basic class of preference graphs, namely graphs with no directed path of length two and no vertex of in-degree larger than two. 
Therewith we provide a dichotomy for the computational complexity of  {\MINMAX} with respect to the number of agents, as the problem is solvable in polynomial time in the case of two agents. 
Moreover, we show that in contrast to the above-mentioned hardness result for graphs of ``height'' two, {\MINMAX} can be solved in polynomial time in graphs of \textit{width} at most two, for any number of agents involved.  

Turning to tree-like structures (Section~\ref{sec:minmax}), we then show that {\MINMAX} is solvable in polynomial time not only for directed matchings but also, more generally, when the preference graph is a collection of out-stars. In addition, for a constant number of agents, the polynomial-time result for out-stars can be generalized to out-forests. 

In Section~\ref{sec:modular}, we then provide a fixed parameter tractability (FPT) result for a bounded number of agents when the preference graph can be decomposed into path modules and independent set modules.  
For the case of a decomposing into independent set modules only, fixed parameter tractability holds for an arbitrary number of agents.

The algorithms employed in this paper encompass a broad spectrum of classical ingredients, such as bottleneck assignment problem,  greedy algorithm, dynamic programming, maximum network flow, and integer linear programming.

\medskip
It should be noted that a generalization from the common preference graph to the setting where each agent has its own preference graph makes the problem much harder in terms of computational complexity. As shown in~\cite{general}, for individual preference graphs the problem of minimizing the maximum dissatisfaction can be solved in polynomial time if each preference graph is a path, as well as for any constant number of agents if the underlying undirected graph of the union of the preference graphs has bounded treewidth (see~\cite[Remark 18]{general} for details).
Note that this result implies that for any constant number of agents, {\MINMAX} is solvable in polynomial time if the underlying graph of the preference graph has bounded treewidth.

On the other hand, it was also shown in~\cite{general} that for individual preference graphs, the problem of minimizing the maximum dissatisfaction remains \NP-complete in each of the following cases:
\begin{enumerate}[(1)]
    \item for two agents, even if the two sets of items are the same,\Omit{{\color{red}~and each preference graph is of ``height'' two (that is, contains no two-edge directed path)}}
    \item if each preference graph is an out-star,
    \item if each preference graph is a directed matching, and
    \item if each preference graph consists of at most two paths containing at most five items in total.
\end{enumerate}
As a consequence of the results of the present paper, all of these intractable cases turn out to be solvable in polynomial time in the case of a common preference graph.

Finally, we point out that the graph classes for which our polynomial-time results hold cover rather natural and basic ways of expressing preferences. 
For practical purposes, these might be easier to state than, e.g., a strict ranking over all the available items. 
For instance, a directed matching represents the situation in which the agents' preferences are expressed over disjoint pairs of items only; i.e., the agents prefer, for each $i$, item $a_i$ over $b_i$, with no additional preferences stated between the items. An out-star covers the scenario in which each agent prefers a dedicated item to several other items, with no further preferences between the remaining items. And an out-forest, as it is a collection of out-stars, (potentially) contains several such scenarios. 

Some of the results of this paper were included or announced in the proceedings paper~\cite{adt2021}.

\section{Preliminaries}\label{sec:prelim}

For the sake of consistency, definitions and notation mostly follow those from~\cite{general}.
We consider graphs that are finite and do not contain loops or multiple edges.
For brevity, we often say \textit{graph} when referring to a directed graph. 
When undirected graphs are used, this will be explicitly stated.

Consider a directed graph $G=(V,A)$ with~$n:=|V|$.
For $a=(u,v)\in A$, vertex $u$ is called the \emph{tail} of $a$ and vertex $v$ is the \emph{head} of~$a$. 
The \textit{in-degree} of a vertex $u$ is the number of arcs in $A$ for which $u$ is the head and the \textit{out-degree} of $u$ is the number of arcs in $A$ for which $u$ is the tail.
The \textit{degree} of a vertex $u$ is the number of arcs in $A$ for which $u$ is either head or tail.
A vertex with in-degree $0$ is called \textit{source}. A vertex with out-degree $0$ is called \textit{sink}. 

A sequence $p=(v_0,v_1,v_2,\ldots,v_\ell)$ with $\ell \ge 0$
and $(v_i,v_{i+1})\in A$ for each $i\in \{0,\ldots,\ell-1\}$ is called a \textit{walk} of length $\ell$ from $v_0$ to $v_\ell$; it is a \textit{path} (of length $\ell$ from $v_0$ to $v_\ell$) if all its vertices are pairwise distinct.
A walk from  $v_0$ to $v_\ell$ is \textit{closed} if $v_0 = v_\ell$.
A \textit{cycle} is a closed walk of positive length in which all vertices are pairwise distinct, except that $v_0 = v_\ell$.

A \textit{directed acyclic graph} is a directed graph with no cycle. Observe that in a directed acyclic graph there is always at least one source and at least one sink. 
An  \textit{out-tree} is a directed acyclic graph $G=(V,A)$ with a specified vertex $r$ (called \textit{root}) such that for each vertex $v\in V\setminus\{r\}$ there is exactly one path from $r$ to~$v$.  
An \textit{out-star} is an out-tree in which each such path is of length at most~$1$. 
An \textit{out-forest} is a disjoint union of out-trees.

\medskip
Let $G = (V,A)$ be a directed acyclic graph.
An \emph{antichain} in $G$ is a set of vertices that are pairwise unreachable from each other.
The \emph{width} of $G$, denoted by $w(G)$, is the maximum cardinality of an antichain in~$G$.
A \emph{chain} in $G$ is a set $C$ of vertices such that for every two vertices $x,y\in C$, the graph $G$ contains either an $x,y$-path or a $y,x$-path.
A \emph{chain partition} of $G$ is a family $\mathcal{C}$ of vertex-disjoint chains in $G$ such that every vertex of $G$ belongs to precisely one chain in $\mathcal{C}$.

\medskip
A directed acyclic graph $G=(V,A)$ induces a binary relation $\succ$ over $V$, by setting $u\succ v$ if and only if $u\neq v$ and there is a directed path from $u$ to $v$ in $G$; in such a case, we say that the agents prefer $u$ to~$v$.  
In particular, the transitive closure of a directed acyclic graph induces a strict partial order over~$V$.

 Let ${\it \pred}(v)$ denote the set of predecessors of $v$ in graph $G$, i.e., the set of all vertices $u\neq v$ such that there is a path from $u$ to $v$ in~$G$. 
 In addition, let ${\it \succc}(v) \subseteq V$ denote the set of successors of $v$ in graph $G$, i.e., the set of all vertices $u\neq v$ such that there is a path from $v$ to $u$ in~$G$. 
 We denote by $\pred[v]$ the set 
$\pred(v) \cup \{v\}$; similarly we denote by $\text{succ}[v]$ the set $\text{succ}(v) \cup \{v\}$.
For $u,v \in V$ we say that item $u$ is \textit{dominated} by  item $v$ if $v\in {\pred}[u]$. 
  
For a directed acyclic graph $G = (V,A)$ and a vertex $v\in V$, we denote by $N^-(v)$ the set of all in-neighbors of $v$, formally $N^-(v)=\{u\in V\colon (u,v)\in A\}$, and similarly by $N^+(v)$ the set of all out-neighbors of $v$, that is $N^+(v)=\{u\in V\colon (v,u)\in A\}$.

\medskip
As stated before, a set of $k$ agents is denoted by $K=\{1,\ldots,k\}$.
An \textit{allocation} $\pi$ is a function $K \rightarrow  2^V$ that assigns to the agents pairwise disjoint sets of items, i.e., for $i,j \in K$, $i\not=j$, we have $\pi(i)\cap\pi(j)=\emptyset$.

To measure the attractiveness of an allocation we will count the number of items that an agent {\bf does not receive} and for which it receives no other more preferred item.
Formally, for an allocation $\pi$ the \textit{dissatisfaction} $\diss_{\pi}(i)$ of agent $i$ is defined as the number of items in $G$ not dominated by any item in $\pi(i)$.  
If $\pi$ allocates a vertex $v$ to agent $i$, it will not change the dissatisfaction $\diss_{\pi}(i)$ if any vertices in $\text{succ}(v)$ are allocated to $i$ in addition to~$v$.
Thus, we say that if $v \in \pi(i)$, then agent $i$ \textit{dominates} all vertices in $\text{succ}[v]$. It will also be convenient to define the \textit{satisfaction} $s_\pi (i)$ of agent $i$ with respect to allocation $\pi$ 
as the number of items in $V$ that are dominated by~$i$.

Continuing the work of \cite{general} we consider the minimization of the  maximum  dissatisfaction among the agents. As a decision problem, this task is formulated as follows.

\medskip
\probdef[Question]{A set $K$ of agents, a set $V$ of items, a directed  acyclic graph $G =(V,A)$, and an integer~$d$.}{Is there an allocation $\pi$ of items to agents such that the dissatisfaction $\diss_{\pi}(i)$ is at most $d$ for each agent $i\in K$?}{\textsc{Min-Max Dissatisfaction with a Common Preference Graph}}{}{}

We abbreviate the above decision problem as \MINMAX. 
Observe that minimizing the maximum dissatisfaction of an agent is equivalent to maximizing the minimum satisfaction. 
We will often use this fact in some of our proofs, without explicitly referring to it.
As pointed out in \cite{general} this equivalence does not hold anymore if each agent has its own preference graph.

By \textit{optimal allocation} and \textit{optimal solution value} we refer to the optimization version of \MINMAX, i.e., the task of actually determining an allocation that minimizes the maximum dissatisfaction among agents and the corresponding maximum dissatisfaction value $\delta$. 
(For the equivalent problem of maximizing the minimum satisfaction among agents the minimum satisfaction is $n-\delta$.)
For the polynomial-time results in this paper we provide constructive proofs, i.e., our results also yield that a respective optimal allocation can be found in polynomial time. 
It is easy to see that if there are fewer items than agents, then there must be at least one agent receiving no item at all and thus reaching the worst possible dissatisfaction~$n=|V|$.

\begin{observation}\label{th:kgreatern}
If $k > n$, then {\MINMAX} has a canonical solution independent of~$G$.
\end{observation}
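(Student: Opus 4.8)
The plan is to invoke the pigeonhole principle: since an allocation hands out pairwise disjoint subsets of the $n$-element set $V$ to $k>n$ agents, at least one agent ends up with the empty set, and an agent holding no item has dissatisfaction exactly $n$, the largest value possible. Consequently every allocation has maximum dissatisfaction $n$, so the optimal value is $n$ and is attained, for instance, by the allocation $\pi^{\star}$ with $\pi^{\star}(1)=V$ and $\pi^{\star}(i)=\emptyset$ for all $i\in\{2,\dots,k\}$ --- a choice that makes no reference to the arcs of $G$.

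Carrying this out, I would first fix an arbitrary allocation $\pi\colon K\to 2^{V}$. Because the sets $\pi(1),\dots,\pi(k)$ are pairwise disjoint subsets of $V$ and $|V|=n<k$, they cannot all be nonempty, so there is some agent $i_{0}$ with $\pi(i_{0})=\emptyset$. By definition, $\diss_{\pi}(i_{0})$ counts the items of $G$ not dominated by any item of $\pi(i_{0})$; as $\pi(i_{0})$ is empty, it dominates nothing, hence $\diss_{\pi}(i_{0})=n$. Since no agent can have dissatisfaction exceeding $n=|V|$, this yields $\max_{i\in K}\diss_{\pi}(i)=n$.

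As $\pi$ was arbitrary, the optimal solution value of {\MINMAX} equals $n$ on every such instance, and the canonical allocation $\pi^{\star}$ above (indeed, any allocation at all) is optimal. Since both $\pi^{\star}$ and the value $n$ depend only on $V$ and $k$ and not on the arc set $A$, this proves the observation. I do not expect any genuine obstacle here; the only point worth stating explicitly is that $n$ is the maximum conceivable dissatisfaction value, which is immediate from the definition of $\diss_{\pi}$.
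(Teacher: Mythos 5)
Your argument is correct and matches the paper's own (one-line) justification, which likewise observes that with $k>n$ some agent must receive no items and therefore attains the worst possible dissatisfaction $n$, so every allocation is optimal and the solution is independent of~$G$. Nothing further is needed.
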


Based on Observation~\ref{th:kgreatern} we will asume $k \leq n$ for the remainder of the paper.

\medskip
A classical theorem of Dilworth states that the width of $G$ equals the minimum number of chains in a chain partition of $G$~\cite{Dilworth}.
Moreover, by applying the approach of Fulkerson~\cite{MR0078334}, a minimum chain partition of $G$ can be computed efficiently by solving a maximum matching problem in a derived undirected bipartite graph having $2n$ vertices (see~\cite{MR545530}). 
This can be done in time $\mathcal{O}(n^{5/2})$ using the algorithm of Hopcroft and Karp~\cite{MR0337699}. 

Many of our proofs will make use of the following general lemma on the solutions to {\MINMAX}.

\begin{lemma}\label{structure-of-optimal-solutions}
When solving {\MINMAX} for a preference graph $G$ with $n$ vertices and any set $K$ of $k$ agents such that $k\le n$, we may without loss of generality restrict our attention to allocations $\pi$ such that for each agent $i\in K$, the set of items allocated to agent $i$ forms a nonempty antichain in~$G$.
\end{lemma}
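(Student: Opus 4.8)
The plan is to transform an arbitrary allocation into one of the desired form without increasing the maximum dissatisfaction, in two stages. Start with any optimal allocation $\pi$ (or, for the purposes of the lemma, any allocation we wish to improve upon). First I would observe that what matters for agent $i$'s dissatisfaction is only the set $D_i := \bigcup_{v \in \pi(i)} \succc[v]$ of items dominated by $i$, since $\diss_\pi(i) = n - |D_i|$. So any modification of $\pi(i)$ that preserves $D_i$ (or enlarges it) is harmless.

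For the \emph{antichain} step: given $\pi(i)$, replace it by the set $\mathrm{Min}(\pi(i))$ of $\succ$-minimal elements of $\pi(i)$, i.e., those $v \in \pi(i)$ with no $u \in \pi(i)$ satisfying $u \succ v$. This is an antichain, and one checks easily that $\bigcup_{v \in \mathrm{Min}(\pi(i))} \succc[v] = \bigcup_{v \in \pi(i)} \succc[v] = D_i$, because every $v \in \pi(i)$ is dominated by some element of $\mathrm{Min}(\pi(i))$ (follow a descending chain in the finite poset induced by $G$ restricted to $\pi(i)$), so $\succc[v]$ is already covered. Hence $\diss_\pi(i)$ is unchanged. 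Doing this simultaneously for all agents keeps the allocated sets pairwise disjoint (we only remove items), so we obtain an allocation in which every $\pi(i)$ is an antichain with the same dissatisfaction profile.

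For the \emph{nonempty} step: after the first stage, suppose some agents have $\pi(i) = \emptyset$, hence $\diss_\pi(i) = n$. Since $k \le n$, there are at least $k$ items, and at most $k-1$ of them can lie in $\bigcup_{i} \pi(i)$ only if some agent is empty — more carefully, the total number of allocated items is at most $n$, and if $t \ge 1$ agents are empty then the remaining $k - t$ agents hold all allocated items; since each nonempty antichain has size $\ge 1$ and there are $n \ge k > k-t$ items available, there exist items not currently allocated, or items we can free up. The clean argument: process empty agents one at a time; for an empty agent $i$, pick any item $v$ not allocated to anyone if one exists and set $\pi(i) = \{v\}$; otherwise every item is allocated, so some agent $j$ holds an antichain of size $\ge 2$ (because there are $n \ge k$ items distributed among $k-t < k$ nonempty agents, forcing some agent to hold at least two), and we move one item of $\pi(j)$ to $i$ — this keeps $\pi(j)$ a nonempty antichain, makes $\pi(i)$ a nonempty antichain, and can only decrease $\diss_\pi(i)$ (from $n$) while leaving $\diss_\pi(j)$ unchanged or decreased is not guaranteed; but $\diss_\pi(j)$ may increase — here we must be careful. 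To avoid that, instead reassign from an agent $j$ whose dissatisfaction is already below $\delta$ and has slack, or simply note that since $n \ge k$, a greedy one-item-per-agent assignment exhausting distinct items is always possible as a starting point, so we may assume from the outset that every agent receives at least one item, then apply the antichain reduction.

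The main obstacle is the bookkeeping in the nonempty step: ensuring that repairing one empty agent does not push another agent's dissatisfaction above the target $\delta$. The cleanest resolution is to run the two stages in the opposite order — first guarantee nonemptiness by, e.g., starting from an optimal $\pi$ and observing that if $\pi(i) = \emptyset$ then $\diss_\pi(i) = n \ge \delta$ forces $\delta = n$, in which case \emph{any} allocation giving each agent one distinct item (possible as $k \le n$) is optimal — and then apply the antichain reduction, which never creates empty sets nor changes any dissatisfaction value. I expect the write-up to separate these two trivial-but-fiddly cases explicitly.
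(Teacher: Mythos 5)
Your proposal is correct and follows essentially the same route as the paper: the antichain step discards items dominated by other items in the same agent's bundle (which preserves each agent's dominated set and hence its dissatisfaction), and the nonemptiness step observes that an empty bundle forces dissatisfaction $n$, whereas $k\le n$ guarantees an allocation giving every agent at least one item and hence maximum dissatisfaction at most $n-1$. The detour in your nonempty step about moving items between agents is unnecessary, as you yourself conclude; the final argument you settle on is exactly the paper's.
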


\begin{proof}
If there exists an agent $i\in K$ such that $\pi(i)$ is not an antichain in $G$, we can simply remove any item allocated to agent $i$ that is dominated by some other item in $\pi(i)$, without affecting the dissatisfaction of agent~$i$.
Repeating this procedure if necessary, we obtain an allocation $\pi$ such that for each agent $i\in K$, the set of items allocated to agent $i$ forms an antichain in~$G$.

Let us now argue why we may assume that all these antichains are nonempty. 
For any allocation $\pi$ in which some agent $i\in K$ receives no items, the dissatisfaction of this agent is equal to $n$, and hence the maximum dissatisfaction equals~$n$.
On the other hand, since $k\le n$, there exists an allocation in which every agent receives at least one item.
Any such allocation results in the maximum dissatisfaction at most $n-1$ and is thus strictly preferred to any allocation assigning no items to some agent.
\end{proof}

The classical \textsc{Linear Sum Assignment} problem asks for a perfect matching with minimum total weight in a bipartite graph, and the \textsc{Linear Bottleneck Assignment} problem seeks a perfect matching in a weighted bipartite graph such that the largest weight of a matching edge is as small as possible.
Both problems can be solved in polynomial time (see \cite[Sec.~6.2]{assign2012}).
For our purposes we will require an assignment restricted to $k$ matching edges. 
This variant was considered in \cite{Amico97} for the sum case (i.e., minimizing the total weight).
Moreover, it is easy to see that the $k$-cardinality restriction can be obtained by adding $n-k$ dummy vertices to each set of the bipartition. 
Without going into details of this exercise we state the result for the min-max case.

\begin{lemma}[folklore]\label{matching-k-min-sum}
Given a bipartite graph $G = (V=A\sqcup B ,E)$ having a perfect matching, an edge weight function $w:E\to \mathbb{R}_+$, and an integer $k\le |V|/2$,  the following problem can be solved in polynomial time:
Compute a matching in $G$ with cardinality $k$ such that the largest weight of a matching edge is as small as possible.
\end{lemma}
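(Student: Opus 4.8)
The plan is to reduce the problem to the ordinary \textsc{Linear Bottleneck Assignment} problem on a slightly enlarged bipartite graph, and then invoke the polynomial-time algorithm cited above (\cite[Sec.~6.2]{assign2012}).

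Since $G$ has a perfect matching, the two sides satisfy $|A|=|B|=:n$, so the hypothesis $k\le |V|/2$ reads $k\le n$; the case $k=n$ is \textsc{Linear Bottleneck Assignment} on $G$ itself and $k=0$ is trivial, so assume $1\le k<n$. I would construct an auxiliary bipartite graph $G'$ from $G$ by adding a set $A'$ of $n-k$ fresh vertices to the side $A$ and a set $B'$ of $n-k$ fresh vertices to the side $B$. Keep every edge of $E$ with its original weight, add every edge between $A'$ and $B$ and every edge between $A$ and $B'$ with weight $0$, and add no edge between $A'$ and $B'$. Each side of $G'$ then has $2n-k$ vertices.

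The first thing to check is that $G'$ has a perfect matching: starting from a perfect matching $M^\ast$ of $G$, retain $k$ of its edges, and then match the $n-k$ no-longer-saturated vertices of $A$ bijectively to $B'$ and the $n-k$ no-longer-saturated vertices of $B$ bijectively to $A'$. The key structural observation is that, because $A'$ has no neighbour in $B'$, in \emph{every} perfect matching $M'$ of $G'$ all $n-k$ vertices of $A'$ must be matched into $B$ and all $n-k$ vertices of $B'$ must be matched into $A$; consequently exactly $k$ vertices of $A$ are matched into $B$, and the corresponding edges form a matching $M\subseteq E$ of cardinality exactly $k$. Thus restricting a perfect matching of $G'$ to $E$ always yields a $k$-cardinality matching of $G$, and conversely (by the construction just used) every $k$-cardinality matching of $G$ extends to a perfect matching of $G'$. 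Since $w$ is nonnegative and all added edges have weight $0$, the largest edge weight of a perfect matching of $G'$ equals the largest edge weight of the associated $k$-cardinality matching of $G$. Hence a minimum-bottleneck perfect matching of $G'$, computed in polynomial time by \textsc{Linear Bottleneck Assignment}, restricts to a $k$-cardinality matching of $G$ minimizing the largest matching-edge weight.

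I do not expect a genuine obstacle here; the only point that needs attention is the decision to omit the $A'$--$B'$ edges, which is precisely what forces every perfect matching of $G'$ to use exactly $k$ edges of $E$. (Had one allowed dummy-to-dummy edges, a perfect matching of $G'$ could restrict to a matching of $G$ of size larger than $k$, and one would have to argue separately that discarding the surplus heaviest edges does not increase the bottleneck --- harmless, but an extra step.) Nonnegativity of the weights is used exactly once, to guarantee that the weight-$0$ padding edges are never the bottleneck of $G'$.
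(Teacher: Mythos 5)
Your construction is exactly the reduction the paper has in mind (it states only that ``the $k$-cardinality restriction can be obtained by adding $n-k$ dummy vertices to each set of the bipartition'' and leaves the details as an exercise), and your write-up correctly fills in those details, including the important point of omitting the dummy-to-dummy edges so that every perfect matching of the auxiliary graph uses exactly $k$ original edges. The argument is correct and follows the same route as the paper.
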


\section{General results for {\MINMAX}}\label{sec:general}

Our first result states that for two agents, {\MINMAX} can be solved in polynomial time.

\begin{theorem}\label{th:2agents}
For $k=2$ and any preference graph $G = (V,A)$, {\MINMAX} can be solved in $\mathcal{O}(|V|+|A|)$ time.
\end{theorem}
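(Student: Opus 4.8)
The plan is to reduce to a small case analysis driven by the structure guaranteed by \Cref{structure-of-optimal-solutions}: it suffices to look for an allocation $\pi$ in which $\pi(1)$ and $\pi(2)$ are each a nonempty antichain, and since both agents can be served greedily once we fix which ``top'' items they get, we only need to control the satisfaction of each agent. Recall that the satisfaction $s_\pi(i)$ equals $|\bigcup_{v\in\pi(i)}\succc[v]|$, and that (by the equivalence noted after the problem definition) we want to maximize $\min\{s_\pi(1),s_\pi(2)\}$. For a single agent, allocating just the sources of $G$ already dominates the whole graph, so with two agents the only tension is that the two agents' dominated sets must partition $V$ in a balanced way; in particular the optimum is never worse than roughly $n/2$ per agent, and the task is to find the exact split.

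First I would observe that by restricting each $\pi(i)$ to be an antichain and, moreover, to consist only of source vertices of the subgraph it is responsible for, one may assume each agent receives a single carefully chosen ``dominating'' set. The key structural step is this: order the vertices of $G$ by a topological sort and consider, for a source $v$, the ``downward closure'' $\succc[v]$. Assigning $\succc[v]$ to agent $1$ and a dominating set for the remaining vertices to agent $2$ gives a candidate solution; symmetrically with the roles swapped. So the plan is to characterize the achievable pairs $(s_\pi(1), s_\pi(2))$ and show the best one can be read off in linear time. Concretely, I expect the optimal strategy to be: let $r$ be the number of sources (equivalently, use a minimum chain partition / the width only implicitly); if there is a single source, one agent is forced to dominate everything and the other must take a maximal ``lower'' chunk, so $\delta$ is determined by the largest antichain-free tail; if there are at least two sources, split the sources between the two agents to balance the sizes of the two dominated regions.

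The heart of the argument, and the step I expect to be the main obstacle, is proving that balancing can always be achieved closely enough and computing it in $\mathcal{O}(|V|+|A|)$ time. Here is the intended mechanism. Perform one reverse topological pass to compute, for each vertex $v$, whether $v$ is a source; then think of the problem as choosing a set $S$ of sources for agent $1$: agent $1$ dominates $D_1=\bigcup_{v\in S}\succc[v]$ and agent $2$ is then best off taking the sources of $G - D_1$ (which dominates $V\setminus D_1$ entirely, plus possibly more — but only items already in $D_1$, which do not help). Thus $s_\pi(2) = |V\setminus D_1|$ exactly when $V\setminus D_1\neq\emptyset$, and $= n$ otherwise, while $s_\pi(1)=|D_1|$. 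So we must choose a union of source-closures $D_1$ making $\min\{|D_1|,\,n-|D_1|\}$ as large as possible (with the degenerate full-domination option available to whichever agent ends up with $V\setminus D_1=\emptyset$). Because source-closures are ``monotone'' building blocks, adding sources one at a time only grows $|D_1|$, and starting from one source and greedily adding the source whose closure adds the fewest new vertices lets $|D_1|$ climb past $n/2$ by the smallest possible overshoot; that overshoot is bounded by the size of the largest single source-closure, which is exactly the unavoidable imbalance. Formalizing this greedy-balance claim — that the value it produces matches an upper bound coming from the fact that some source must be wholly on one side — and checking the linear running time (one topological sort, one linear pass to compute closure sizes via sink-to-source DP, one linear scan to pick the split) is the crux; everything else is bookkeeping justified by \Cref{structure-of-optimal-solutions}.
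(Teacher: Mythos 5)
There is a genuine gap, and it sits exactly where you place the ``crux'': your entire optimization rests on the claim that once agent $1$ dominates $D_1=\bigcup_{v\in \pi(1)}\succc[v]$, agent $2$'s satisfaction is only $|V\setminus D_1|$ because dominating ``items already in $D_1$ \dots\ do not help.'' This is false. Only the \emph{allocated} sets $\pi(1),\pi(2)$ must be disjoint; the \emph{dominated} sets may overlap arbitrarily, and overlap does help, since $s_\pi(2)$ counts every item dominated by agent $2$ regardless of whether agent $1$ also dominates it. Consequently the objective you derive, maximizing $\min\{|D_1|,\,n-|D_1|\}$, is not the objective of the problem, and the greedy balancing of source-closures optimizes the wrong quantity. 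Concretely, take two sources $a,b$ and $n-2$ further vertices each reachable from both: allocating $\{a\}$ to agent $1$ and $\{b\}$ to agent $2$ gives each agent satisfaction $n-1$, i.e., dissatisfaction $1$, whereas your framework caps $\min\{s_\pi(1),s_\pi(2)\}$ at $\min\{|D_1|,n-|D_1|\}\le n/2$ and would report a dissatisfaction of order $n/2$.

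The invariant the proof actually needs is simpler and involves no balancing of closure sizes. An item $v$ dissatisfies agent $i$ exactly when $\pi(i)\cap\pred[v]=\emptyset$; for a source $v$ we have $\pred[v]=\{v\}$, so $v$ dissatisfies every agent that does not receive $v$ itself, whence $\delta_\pi(1)+\delta_\pi(2)\ge |S|$ for the set $S$ of sources of $G$ and the optimum is at least $\lceil |S|/2\rceil$. This bound is attained by splitting $S$ as evenly as possible and giving each agent, in addition, the sources of $G-S$ that its own half of $S$ fails to dominate; one then checks that the only items an agent fails to dominate are the other agent's sources, and all of this is computable by linear-time graph traversals. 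Your proposal never isolates the sources as the sole unavoidable losses, which is the one idea the argument turns on.
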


\begin{proof}
\begin{sloppypar}
Let $S$ be the set of sources of $G$ (these correspond to items that are not dominated by any other item).
Let $G'$ be the graph $G-S$ and let $S'$ be the set of sources of~$G'$.
Furthermore, let $S_1$ be any subset of $S$ with cardinality $\left\lfloor \frac{|S|}{2} \right\rfloor$ and $S_2=S\setminus S_1$.
We denote by $S_1'$ the vertices in $S'$ that are not dominated by any vertex in $S_1$,
and, similarly, by $S_2'$ the vertices in $S'$ that are not dominated by any vertex in~$S_2$.
More formally, $S_1' = S'\setminus \{{\it \succc}(v)\colon v\in S_1\}$ and
$S_2' = S'\setminus \{{\it \succc}(v)\colon v\in S_2\}$.
The four sets $S_1$, $S_2$, $S_1'$, and $S_2'$ can be computed in time linear in the size of the graph by a simple graph traversal.
Moreover, we claim that they are pairwise disjoint.
The disjointness of any pair follows immediately from the definitions, except for $S_1'$ and~$S_2'$.
Suppose for a contradiction that there exists a vertex $u\in S_1'\cap S_2'$.
Then $u$ belongs to $S'$ and, hence, is a source in~$G-S$.
Since $u$ is not a source in $G$, it must have a predecessor $v$ in~$G$.
Furthermore, since $u$ is a source in $G-S$, vertex $v$ must belong to $S$; in particular, we must have
$v\in S_i$ for some $i\in \{1,2\}$. However, this implies that $u\not\in S_i'$, a contradiction.
\end{sloppypar}

Since the sets $S_1\cup S_1'$ and $S_2\cup S_2'$ are disjoint, setting $\pi(1) = S_1\cup S'_1$
and $\pi(2) = S_2\cup S'_2$ defines a valid allocation of items to the two agents.
The undominated items for agent $1$ (agent $2$) are exactly the items in $S_2$ (in $S_1$).
Clearly, $\pi$ can be computed in linear time and the corresponding dissatisfaction of the two agents is
$\delta_\pi(1) = |S_2| = \left\lceil \frac{|S|}{2} \right\rceil$ and
$\delta_\pi(2)= |S_1| = \left\lfloor \frac{|S|}{2} \right\rfloor$.
To complete the proof, we show that $\pi$ in fact optimally solves
{\MINMAX} problems for the given input instance~$G^*$.

Note that under allocation $\pi$, the dissatisfaction of each agent is at most $\left\lceil \frac{|S|}{2} \right\rceil$.
Since for any allocation $\pi^*$ we have $\delta_{\pi^*}(1)+ \delta_{\pi^*}(2)\geq |S|$ and consequently $\max\{\delta_{\pi^*}(1), \delta_{\pi^*}(2)\} \geq \left\lceil \frac{|S|}{2} \right\rceil$, the allocation $\pi$ is indeed optimal for {\MINMAX}.
\end{proof}

The result of Theorem~\ref{th:2agents} is best possible regarding the number of agents, unless \PP{} = \NP.
Indeed, we show next that {\MINMAX} is \NP-complete for $k$ agents for any $k\ge 3$, even if the graph has no directed path of length two. 

\begin{theorem}\label{th:3agents}
For each $k\ge 3$, {\MINMAX} is \NP-complete.
This result holds even if the preference graph has no directed path of length two and no vertex of in-degree larger than 2.
\end{theorem}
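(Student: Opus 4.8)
The plan is to prove \NP-completeness by reducing from a suitable partitioning problem. Membership in \NP{} is immediate: given an allocation $\pi$, one can compute each $\diss_\pi(i)$ in polynomial time (via a graph traversal computing the dominated sets) and check that all of them are at most $d$. So the work is entirely in the hardness reduction, and I would first focus on the case $k=3$; the extension to larger constant $k$ should follow by padding with ``private'' items (add, for each extra agent, a cluster of fresh isolated vertices that only that agent can sensibly take), so that an optimal solution is forced to hand those clusters to the extra agents and the remaining instance is exactly the $k=3$ one.

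For $k=3$, the natural source problem is a balanced/3-partition-type problem, but since the target graph must have no directed path of length two and no in-degree exceeding two, I want to encode ``load'' purely through shared sinks rather than through long chains. So the structural picture I would aim for is a bipartite DAG: a top layer of ``source'' items and a bottom layer of ``sink'' items, with each sink having in-degree at most two, and each source pointing to some sinks. Recall from Lemma~\ref{structure-of-optimal-solutions} that we may assume each agent gets a nonempty antichain; combined with the height-two structure, an agent's dissatisfaction is governed by how many of the (few) top-layer sources it fails to dominate and how many bottom sinks remain uncovered. I would try to make the sources themselves the ``elements'' to be partitioned: give each source a private gadget of sinks so that an agent dominates a large fixed number of sinks exactly when it takes a specified bundle of sources, and arrange the arithmetic so that ``every agent reaches dissatisfaction $\le d$'' is equivalent to a 3-way partition of the sources into bundles of equal total weight. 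To realize weights with in-degree-$2$ sinks, I would represent the weight of a source $s$ by attaching to $s$ a collection of in-degree-one sinks (one per unit), and use in-degree-two sinks only where two sources genuinely ``share'' coverage responsibility — e.g. to force that the three agents' bundles are disjoint and exhaust the element set, or to implement a clause/consistency check if reducing from a variant of 3-Partition or $\textsc{3-Dimensional Matching}$.

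Concretely I expect the cleanest route is a reduction from (a restricted form of) $\textsc{3-Partition}$ or $\textsc{Numerical 3-Dimensional Matching}$: given $3m$ positive integers summing to $mB$, build a source $s_j$ for each integer $a_j$ with $a_j$ private in-degree-one sink children, plus a common pool of ``slack'' sinks, and set $k$ appropriately (here $k=3$ handles the three agents, and the $3m$ items are distributed among them — so actually I would reduce to the version where we split $3m$ numbers into $m$ triples, and instead take $k=m$ agents, which also gives the result for every fixed $k\ge 3$ by taking $m=k$ and padding). The threshold $d$ is then chosen so that $\diss_\pi(i)\le d$ for all $i$ forces each agent's set of chosen sources to have total weight exactly $B$, i.e. forces a valid 3-partition; conversely a valid 3-partition yields an allocation meeting the bound. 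I would verify the two directions by the ``satisfaction = $n-\delta$'' bookkeeping: the total number of sink-items is fixed, and an agent's satisfaction is (its sources) $+$ (sinks dominated), which is maximized-in-the-minimum precisely under a balanced split.

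The main obstacle will be the in-degree-two constraint together with forbidding directed paths of length two: these rule out the most convenient gadgets (long chains, high-fan-in ``counter'' sinks), so the arithmetic that normally lives in a single high-degree sink has to be spread across many unit sinks while still keeping disjointness and the equal-load equivalence tight. Making the chosen threshold $d$ force an \emph{exact} equipartition (and not merely an approximate one) under these weak structural primitives — i.e. showing the reduction is gap-preserving in the needed sense — is the delicate step; I would handle it by reducing from a strongly \NP-hard, unary-encoded partition problem so that polynomial blow-up from the unit-sink encoding is affordable, and by a careful count showing $\sum_i s_\pi(i)$ is constant, so that $\min_i s_\pi(i)$ hits its maximum iff all $s_\pi(i)$ are equal, iff the partition is exact.
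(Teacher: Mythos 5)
There is a genuine gap, and it is fatal to the reduction as sketched: your source problem is not \NP-hard in the regime you need. The theorem asks for hardness for each \emph{fixed} $k\ge 3$, so with $k$ agents you are forced to encode a balanced partition of the source weights into exactly $k$ classes. You correctly observe that representing a weight $a_j$ by $a_j$ unit sinks requires the weights to be polynomially bounded, and so you propose to start from a strongly \NP-hard, unary-encoded problem. But balanced partition of unary-encoded numbers into a \emph{constant} number of classes is solvable in polynomial time by dynamic programming, and the strongly \NP-hard variant (\textsc{3-Partition} into $m$ triples) is only hard when $m$ grows with the input --- which is exactly why your fallback ``take $k=m$ agents, then pad'' does not work: for fixed $k$ you would be reducing from \textsc{3-Partition} with a constant number of triples, which is polynomial. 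Either the weights are unary and the source problem is easy, or they are binary and the unit-sink gadget blows up exponentially; there is no setting of parameters in which your reduction establishes hardness for fixed $k=3$.

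A secondary problem is your closing bookkeeping step: you assert that $\sum_i s_\pi(i)$ is constant over allocations, so that maximizing the minimum forces equality. This is false for the dissatisfaction measure at hand: if two items with a common successor are allocated to the same agent, that successor is counted once rather than twice, so the total satisfaction genuinely depends on the allocation. The paper's proof turns precisely this overlap effect into the hardness: it reduces from \textsc{$k$-Colorability} (which, unlike balanced partition, \emph{is} \NP-hard for each fixed $k\ge3$), subdivides each edge of $H$ and orients arcs toward the subdivision vertices (giving height two and in-degree at most two for free), and shows that the total dissatisfaction over all agents attains its lower bound $k|V(G)|-\sum_v|\succc[v]|$ exactly when, for every subdivision vertex $w_e$ with predecessors $u,v$, the three items go to three distinct agents --- i.e., exactly when the coloring is proper. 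Equal individual loads are then enforced not by weight arithmetic but by taking $k$ disjoint copies of the gadget and cyclically permuting the agents across copies. If you want to salvage your approach, you would need to replace the partition source problem by one that stays hard for constant $k$ and to make the shared in-degree-two sinks (rather than private unit sinks) carry the combinatorial constraint; at that point you have essentially rediscovered the coloring reduction.
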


\begin{proof}\begin{sloppypar}
Membership in $\NP$ is clear.
We provide a reduction from \textsc{$k$-Colorability}, which is the following decision problem: ``Given an undirected graph $H$, is $H$ $k$-colorable?''
This problem is well known to be \NP-complete for every $k\geq 3$ (Garey et al.~\cite{gareyjohnstock76}).  We proceed as follows. Given an instance of the above problem, we create a directed graph $G$ from $H$ by doing the following for each edge $e = \{u,v\}$ of $H$: delete the edge $e$ and add a new vertex $w_e$ and arcs~$(u,w_e),(v,w_e)$.
In words, subdivide every edge of $H$ and orient the edges of the so-obtained graph towards the new vertices.
Let the graph $kG$ be made up of $k$ disjoint copies of~$G$. Below we show that $H$ is $k$-colorable if and only if the graph $kG$ admits an allocation for $k$ agents such that each agent has dissatisfaction of at most $\totdiss :=k|V(G)|-\sum_{v \in V(G)}|\succc[v]|$. 
\end{sloppypar}
Assume that such an allocation $\pi$ for $kG$ exists. Observe that for each copy of $G$, the total dissatisfaction (i.e., sum over all agents' dissatisfaction) of any allocation is \emph{at least} $\totdiss$:
without being allocated, each vertex $v \in V(G)$ creates a unit of dissatisfaction for each agent (hence the term $k|V(G)|$ in $\totdiss$); if vertex $v$ is allocated to an agent, the total dissatisfaction value is reduced by at most $|\succc[v]|$.
A reduction by \emph{exactly} $|\succc[v]|$ is achieved if none of the successors of $v$ has a predecessor allocated to the same agent. 
Summing up over all $v\in V(G)$ yields the lower bound of $\totdiss$.

By our assumption, allocation $\pi$ yields a total dissatisfaction over all $k$ agents of at most $k\cdot \totdiss$.
It follows from above that $kG$ contributes at least $k\cdot \totdiss$ to the total dissatisfaction.
Therefore, equality must hold.
As indicated above, for such an allocation $\pi$ with total dissatisfaction exactly $k\cdot \totdiss$ it must hold for every vertex $w_e$ arising from an edge $e = \{u,v\}$  that $w_e$ and its predecessors $u$ and $v$ are assigned to three different agents.
Thus, identifying colors with agents, a $k$-coloring for $H$ results from $\pi$  restricted to an arbitrary copy of $G$ in the graph~$kG$. 

Conversely, a $k$-coloring for $H$ can be extended to an allocation for $kG$ as follows. 
Pick a copy of $G$ in~$kG$. In that copy, identify colors with agents and allocate each vertex of color $h$ to agent~$h$. In addition, for each edge $e\in E(H)$, allocate the item $w_{e}$ to any of the $k-2$ agents that did not receive an endpoint of~$e$. 
Such an allocation for $G$ has a total dissatisfaction of $k|V(G)|-\sum_{v \in V(G)}|\succc[v]|= \totdiss$, since no vertex assigned to an agent $h$ is dominated by another vertex allocated to~$h$. 
The extended allocation for $kG$ can be obtained by ``cycling'' the agents in each copy:
If a vertex belongs to agent $i$ in copy $j$ of $G$, then this vertex should belong to agent $i+\ell \pmod k$ in copy $j+\ell \pmod k$ of~$G$. 
The total dissatisfaction in $kG$ is hence $k\cdot \totdiss$. Note that every agent gets the same dissatisfaction due to the cycling, hence the individual dissatisfaction equals to $\totdiss$ for each agent.
This completes the reduction.
\end{proof}

In contrast to the intractability results for {\MINMAX} on graphs with ``height'' two as shown by \Cref{th:3agents}, we now provide an efficient algorithm for the problem on graphs of \emph{width} at most two (see \Cref{fig:width-2-graph} for an example).

\begin{figure}
    \centering
\begin{tikzpicture}[every node/.style={draw,circle,inner sep=0.8pt,minimum size=2mm}]
\foreach \x in {0,...,7}{
    \node (A\x) at (1,-\x) {};
}
\foreach \x in {0,...,5}{
    \node (B\x) at (3,-\x) {};
}

\draw[->] (A0) -> (A1);
\draw[->] (A1) -> (A2);
\draw[->] (A2) -> (A3);
\draw[->] (A3) -> (A4);
\draw[->] (A4) -> (A5);
\draw[->] (A5) -> (A6);
\draw[->] (A6) -> (A7);

\draw[->] (B0) -> (B1);
\draw[->] (B1) -> (B2);
\draw[->] (B2) -> (B3);
\draw[->] (B3) -> (B4);
\draw[->] (B4) -> (B5);

\draw[->] (A0) -> (B1);
\draw[->] (A1) -> (B2);
\draw[->] (A1) -> (B3);

\draw[->] (B1) -> (A3);

\draw[->] (B4) -> (A6);

\draw[->] (B5) -> (A7);

\end{tikzpicture}
    \caption{Illustration of a graph of width~$2$.}
    \label{fig:width-2-graph}
\end{figure}
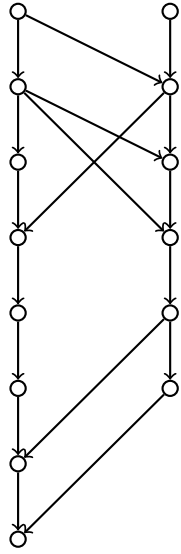

\begin{theorem}\label{th:width-two-MINMAX-MINSUM}
{\MINMAX} is solvable in polynomial time for any number $k$ of agents if the preference graph $G$ has width at most~$2$.
\end{theorem}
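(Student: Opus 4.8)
The plan is to exploit the very rigid structure of width-at-most-$2$ DAGs. By Dilworth's theorem (cited above) such a graph $G$ has a chain partition into at most two chains $C_1, C_2$; this can be computed in polynomial time via Fulkerson's matching approach. Think of $C_1$ and $C_2$ as two vertically ordered columns. By \Cref{structure-of-optimal-solutions} we may restrict attention to allocations where each agent receives a nonempty antichain. An antichain in a width-$2$ graph has at most two vertices, so each agent receives either a single vertex, or a pair $\{x,y\}$ with $x\in C_1$, $y\in C_2$ that are mutually unreachable. The key observation I would establish first is that the satisfaction $s_\pi(i)$ of agent~$i$ depends only on the ``lowest'' vertex it receives in each chain: if agent~$i$ receives vertex $x$ in $C_1$, then it dominates $\succc[x]\cap C_1$, a suffix of the chain $C_1$, and similarly in $C_2$. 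So each agent's contribution is determined by a pair of ``cut positions'' (one per chain, possibly $\infty$ meaning ``nothing from this chain'').

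Next I would set up the combinatorial core. Order $C_1 = (a_1 \succ a_2 \succ \dots)$ and $C_2 = (b_1 \succ b_2 \succ \dots)$ by reachability within each chain (sources first). For a vertex $a_p \in C_1$ assigned to an agent, the set of $C_1$-vertices it dominates is the suffix $\{a_p, a_{p+1}, \dots\}$, and the set of $C_2$-vertices it dominates is also a suffix of $C_2$ (since $\succc(a_p)\cap C_2$ is upward-closed in $C_2$), determined by some index $f_1(p)$; symmetrically $f_2(q)$ for $b_q\in C_2$. These monotone functions are computable in polynomial time. An allocation is then essentially a choice, for each agent, of a source-position in $C_1$ and/or a source-position in $C_2$ (subject to the antichain constraint when both are chosen, and subject to all chosen $C_1$-positions being distinct and likewise for $C_2$). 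The dissatisfaction of an agent whose chosen positions are $p$ (in $C_1$) and $q$ (in $C_2$) is $n$ minus the size of the union of the two dominated suffixes; because dominated sets are suffixes of the two chains, this union size is easy to express via $p,q,f_1(p),f_2(q)$.

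The algorithmic idea is then a dynamic program that sweeps down the two chains simultaneously, processing ``slots'' from the bottom (sinks) upward, deciding which agents get assigned which positions, while tracking how many agents have already been served and the current frontier in each chain. Since there are only $O(n^2)$ frontier pairs and $O(k)\le O(n)$ agents, the state space is polynomial and transitions are cheap; we guess the target dissatisfaction bound $d$ (or binary-search on it) and check feasibility. Actually a cleaner route, which I would prefer, is to reduce directly to the bottleneck assignment formulation of \Cref{matching-k-min-sum}: build a bipartite graph between agents and ``antichain-positions'' (all single vertices plus all valid cross-chain pairs), with each edge weighted by the resulting dissatisfaction, and ask for a cardinality-$k$ matching minimizing the maximum edge weight — except that this alone ignores the interaction that two agents picking overlapping suffixes in the same chain is fine (suffixes nest), so assignments do not actually conflict on domination, only on physically holding the same vertex. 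I would therefore argue that an optimal allocation can be taken so that the $C_1$-positions used are the $k$ bottommost sources of $C_1$ (or fewer), likewise for $C_2$, because pushing a chosen source downward never decreases anyone's domination of the other chain while it can only help — making the assignment problem a clean bottleneck matching on $O(n)$ candidate positions per side.

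The main obstacle I anticipate is precisely this last point: correctly handling the antichain/cross-chain pairing constraint and proving an exchange argument that lets us normalize which positions are used, so that the problem collapses to a polynomial-size bottleneck assignment (or a small DP) rather than a search over exponentially many antichain families. The domination bookkeeping (that everything reduces to suffixes of two chains) is routine; the nontrivial content is the structural lemma pinning down a canonical optimal allocation and then invoking \Cref{matching-k-min-sum}.
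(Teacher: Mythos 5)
Your overall strategy---Dilworth's theorem, antichains of size at most two per agent via \Cref{structure-of-optimal-solutions}, and a reduction to the cardinality-$k$ bottleneck matching problem of \Cref{matching-k-min-sum}---is the same as the paper's. However, your reduction is incomplete at exactly the point you yourself flag as the ``main obstacle.'' A bipartite graph between agents and antichain-positions does not encode the constraint that the item sets of different agents be pairwise \emph{disjoint}, since two distinct antichain-positions (e.g., $\{x\}$ and $\{x,y\}$) can share a vertex, so a matching there need not yield a valid allocation. The normalization you propose to repair this---that the used $C_1$-positions may be taken to be a canonical segment of $C_1$ because ``pushing a chosen source downward never decreases anyone's domination''---is left unproven and is misstated (moving a vertex toward the sinks only shrinks the suffix it dominates; the beneficial direction is toward the sources, and even then one must handle the case where the move destroys the cross-chain antichain property of that agent's pair). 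Moreover, even after such a normalization you would still face a bottleneck pairing problem between the chosen positions of the two chains subject to the cross-chain antichain constraint, i.e., essentially the problem you started with.

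The paper sidesteps all of this by building the matching instance on the \emph{items} rather than between agents and positions: it takes $\widehat V = V\cup V'$ with $V'$ a disjoint copy of $V$, puts an edge $\{x,y\}$ for every size-two antichain $\{x,y\}\subseteq V$ and an edge $\{v,v'\}$ for every $v\in V$, and weights each edge by the dissatisfaction of an agent receiving that antichain (respectively, the single item $v$). The two-chain partition shows this graph is bipartite, and a matching of cardinality $k$ in it is \emph{exactly} a family of $k$ pairwise disjoint nonempty antichains of size at most two---disjointness comes for free from the matching property, and no exchange or normalization argument is needed. Since the agents are interchangeable under a common preference graph, they need not appear in the graph at all. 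If you replace your agent-versus-position graph with this construction, your appeal to \Cref{matching-k-min-sum} goes through immediately and the suffix bookkeeping via $f_1,f_2$ becomes unnecessary.
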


\begin{proof}
Let $G = (V,A)$ be a directed acyclic graph of width at most~$2$.
Let us associate to $G$ an undirected bipartite graph $\widehat G = (\widehat V,E)$ defined as follows:
\begin{enumerate}
\item The vertex set of $\widehat G$ is $\widehat V = V\cup V'$ where $V' = \{v'\colon v\in V\}$ is a set of new vertices.
\item There are two types of edges in $\widehat G$: edges of the form $\{x,y\}\subseteq V$, where $\{x,y\}$ is an antichain 
in $G$ with size two, and edges of the form $\{v, v'\}$, where $v\in V$.
\end{enumerate}
Note that $\widehat G$ is bipartite.
Indeed, if $G$ has width one, then each of $V$ and $V'$ is an independent set in $\widehat G$ and their union is $\widehat V$.
If $G$ has width two, then by Dilworth's theorem there exists a chain partition $\{P,Q\}$ of $G$ with size two, and the sets $P\cup \{v'\colon v\in Q\}$ and $Q\cup \{v'\colon v\in P\}$ are independent sets in $\widehat G$ with union $\widehat V$.
Furthermore, the edges of the form $\{v, v'\}$, where $v\in V$, form a perfect matching in $\widehat G$.

We now assign nonnegative weights to the edges of $\widehat G$ as follows.
For each edge $e = \{x,y\}\subseteq V$ corresponding to an antichain in $G$ with size two, we define the weight to $e$ to be the dissatisfaction of an agent $i\in K$ under any allocation $\pi$ such that $\pi(i) = \{x,y\}$, that is, the total number of vertices in $G$ that are not reachable by a path from $\{x,y\}$.
Similarly, for each edge of the form $\{v, v'\}$ where $v\in V$, we define the weight to $e$ to be the dissatisfaction of an agent $i\in K$ under any allocation $\pi$ such that $\pi(i) = \{v\}$, that is, the total number of vertices in $G$ that are not reachable by a path from~$v$.
Clearly, this weight function can be computed in polynomial time given~$G$.

When solving the {\MINMAX} problem for $G$ and $K$, \Cref{structure-of-optimal-solutions} implies that each agent $i\in K$ will be allocated either only one item or two items forming an antichain.
To each such allocation $\pi$ we can associate a matching $M_\pi$ in $\widehat{G}$ as follows.
For each agent $i\in K$ receiving only one item, say $\pi(i) = \{v\}$ for some $v\in V$, we include in $M_\pi$ the edge $\{v,v'\}$.
For each agent $i\in K$ receiving some antichain $\{x,y\}\subseteq V$ of size two, i.e., $\pi(i) = \{x,y\}$, we include in $M_\pi$ the edge $\{x,y\}$.
Since the sets of items assigned to different agents are pairwise disjoint, the obtained set $M_\pi$ is indeed a matching.
Furthermore, since for each of the $k$ agents, the set of items allocated to the agent forms a \emph{nonempty} antichain in $G$, we can associate to each agent $i\in K$ a unique edge $e_i\in M_{\pi}$; in particular, $M_{\pi}$ has cardinality~$k$.
By construction, the weight of the edge $e_i$ equals to the dissatisfaction of agent $i$ with respect to the allocation $\pi$.
Note that the above procedure can be reversed: to any matching $M = \{e_i\colon i\in K\}$ of cardinality $k$ in $\widehat G$, we can associate an allocation $\pi_M$ of items in $G$ to the $k$ agents such that the dissatisfaction of agent $i$ under $\pi_M$ equals the weight of the edge~$e_i$.

It follows that the {\MINMAX} problem for $G$ and $K$ can be reduced in polynomial time to the problem of computing a matching in $\widehat G$ with cardinality $k$ such that the maximum weight of a matching edge is as small as possible.
By~\Cref{matching-k-min-sum}, this problem can be solved in polynomial time.
\end{proof}

\section{Polynomially solvable cases on out-forests}\label{sec:minmax}

Starting with the simplest meaningful case of an out-forest, we consider the case where $G$ is a directed matching, i.e., a collection of disconnected edges.
It is easy to see that for this case {\MINMAX} can be solved by a greedy strategy,
even if the number of agents $k$ is unbounded.
It suffices to assign all heads of directed matching edges as equally as possible to the agents.
Then one can iteratively assign tails to agents with lowest score (avoiding assigning both ends of an edge to the same agent).
In this way, the difference between the dissatisfactions of any two agents can be at most one.
However, we can also extend this approach to the more general setting
of out-stars.

\begin{theorem}\label{th:minmax-out-stars}
For any number $k$ of agents, {\MINMAX} can be solved in polynomial time if the preference graph $G$ is a collection of out-stars.
\end{theorem}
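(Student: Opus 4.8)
The structure of an out-star is simple: each out-star $T$ has a center $c$ (the root) and some number of leaves, so allocating the center to an agent dominates the whole out-star, while allocating a leaf dominates only that leaf. The plan is to reduce the problem to a pure counting/balancing argument. First I would invoke \Cref{structure-of-optimal-solutions} to restrict attention to allocations where each agent receives a nonempty antichain. In a collection of out-stars, an antichain either contains a single center, or consists of leaves only (possibly leaves from several different out-stars, plus possibly centers of out-stars from which no leaf is taken — but a center together with a leaf of the \emph{same} star is a chain, so forbidden). So the "interesting" choice for each agent is: either grab one whole out-star via its center, or collect a bunch of leaves (and isolated centers, i.e., centers of leafless stars, which behave like leaves).

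The key observation I would isolate is a monotonicity/exchange property: if an agent is going to collect individual vertices (leaves and isolated centers) rather than a center-of-a-nontrivial-star, then the dissatisfaction depends only on \emph{how many} such vertices it gets, not which ones — each such vertex reduces dissatisfaction by exactly $1$. Meanwhile grabbing the center of an out-star with $t$ leaves reduces dissatisfaction by $t+1$ but "uses up" that whole star. So I would set up the following combinatorial picture: we must decide which out-stars are "claimed" (their center given to some agent, dominating the entire star) and which are "dissolved" (their vertices handed out individually as leaves/singletons). Since at most $k$ stars can be claimed and each claimed star occupies one agent entirely, I would guess that in an optimal solution one claims the stars in decreasing order of size, up to some threshold, and then distributes all remaining vertices as evenly as possible among the remaining agents.

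Concretely, the algorithm I would propose: sort the out-stars by number of vertices (size $s_1\ge s_2\ge\cdots$); for each $j$ from $0$ to $\min(k, \#\text{stars})$, consider the scenario where the $j$ largest stars are each claimed by a distinct agent (that agent's satisfaction is $s_1,\dots,s_j$ respectively, i.e., dissatisfaction $n-s_1,\dots,n-s_j$), and all vertices of the remaining stars — a pool of $P_j := n - \sum_{i\le j} s_i$ vertices — are distributed among the remaining $k-j$ agents as evenly as possible, giving each of them satisfaction either $\lfloor P_j/(k-j)\rfloor$ or $\lceil P_j/(k-j)\rceil$, hence dissatisfaction $n - \lfloor P_j/(k-j)\rfloor$ in the worst case. (One must check $P_j \ge k-j$ so that every remaining agent gets at least one vertex, which holds since $k\le n$ and the claimed stars each have size $\ge$ the average remaining size — this needs a short argument, or one handles small cases separately.) The maximum dissatisfaction in scenario $j$ is $\max\{\,n - s_j,\ n - \lfloor P_j/(k-j)\rfloor\,\}$ (with the convention that the first term is absent when $j=0$ and the second when $j=k$), and the algorithm returns the minimum over all $j$. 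Each scenario is constructible in polynomial time, so the whole thing is polynomial.

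The main obstacle is the optimality proof: showing that some scenario $j^\star$ of the above restricted form achieves the true optimum. The natural approach is an exchange argument starting from an arbitrary optimal allocation normalized by \Cref{structure-of-optimal-solutions}. Let $J$ be the set of agents who received the center of a nontrivial out-star; every other agent received a set of individual vertices (leaves and leafless centers). First argue we may assume each agent in $J$ received \emph{only} that center's successors, contributing nothing extra. Then argue that if some claimed star is \emph{smaller} than some dissolved star, swapping improves (or does not worsen) the objective: claiming the larger star raises one agent's satisfaction, and the leaves freed from the now-dissolved smaller star are redistributed — here one has to be careful that redistribution keeps the minimum satisfaction from dropping, which is where the "distribute as evenly as possible" structure and the bound $P_j\ge k-j$ come in. After such exchanges we may assume the claimed stars are exactly the $|J|$ largest, and that the free vertices are spread as evenly as possible, which is precisely scenario $j=|J|$. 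I expect the redistribution-doesn't-hurt step to require the most care, possibly with a separate lemma that among the "individual-vertex" agents the optimal distribution is the balanced one (immediate, since only cardinalities matter) together with a check that moving one vertex from a claimed star into the pool cannot force the balanced pool's minimum below $n-s_{j}$ for the relevant $j$.
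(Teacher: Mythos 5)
There is a genuine gap, and it is structural: your characterization of the feasible antichains is wrong, and the whole reduction to the ``claim the $j$ largest stars, one per agent'' enumeration collapses because of it. In a collection of out-stars, the roots of two \emph{different} out-stars are mutually unreachable, so an antichain may contain the roots of arbitrarily many nontrivial stars, and may additionally contain leaves of stars whose roots it does not contain. Consequently, (i) a single agent can ``claim'' several whole stars at once, (ii) a claimed star does not ``occupy one agent entirely,'' and (iii) more than $k$ stars can be claimed in total. Your dichotomy (``either grab one whole out-star via its center, or collect a bunch of leaves'') excludes exactly the allocations that are optimal in many instances. Concretely, take $k=3$ and six out-stars, each with a root and $10$ leaves, so $n=66$. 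The optimum gives every agent two roots plus twenty leaves of the four stars whose roots it does not hold, for satisfaction $22+20=42$ and maximum dissatisfaction $24$. Your scenarios evaluate to: $j=0$ gives satisfaction $\lfloor 66/3\rfloor=22$ (dissatisfaction $44$), and every $j\ge 1$ leaves some claiming agent with satisfaction $11$ (dissatisfaction $55$); your algorithm would therefore return $44$, which is far from optimal. The same failure already occurs for directed matchings, where the correct solution mixes many tails and many heads per agent. A secondary issue is that your per-vertex accounting for the ``pool'' (each pooled vertex contributes satisfaction $1$) is also false once roots of dissolved nontrivial stars sit in the pool.

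The paper's proof avoids this by not fixing the set of claimed stars in advance: it runs a greedy that assigns the roots of all nontrivial stars one at a time to a currently least-satisfied agent (so an agent may accumulate several roots), then distributes leaves and singletons greedily to the least-satisfied agent, using an exchange step when every remaining unassigned leaf hangs below a root owned by that agent. Optimality is then argued in two cases, according to whether the greedy gives every agent at most one root or some agent at least two, with a counting bound in the first case and a ``satisfactions differ by at most $a_{k+1}+1$ after the roots, then by at most $1$ after the leaves'' invariant in the second. Note also that the greedy (and any correct algorithm here) must treat $k=2$ separately --- the paper reduces it to \Cref{th:2agents} --- whereas your proposal does not distinguish $k=2$ at all. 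To salvage your approach you would have to enumerate, for each agent, a multiset of claimed stars plus a count of collected leaves, which is no longer a polynomial-size search space without substantially more structure.
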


\begin{proof}
To the input graph $G$ we associate a sorted sequence of integers $a_1 \geq a_2 \geq \dots \geq a_{\ell}\geq 1$ corresponding to the numbers of leaves of the \emph{nontrivial} out-stars (that is, out-stars with at least one edge), where $r_i$ is the root of a star with $a_i$ leaves, and an integer $a \geq 0$ denoting the number of trivial out-stars consisting of a single vertex.
Note that in the whole proof those are treated as leaves.
We show how to solve the optimization variant of the max-min satisfaction problem, which is equivalent to ${\MINMAX}$.
For the case $k=2$ the result follows immediately from Theorem~\ref{th:2agents}.
Let us consider $k\geq 3$.

We start by assigning the roots of the stars using the following greedy strategy. 
First, we process the nontrivial stars and assign their roots only.
For each $i\in \{1,\ldots,\ell\}$, denoting by $\pi'$ the partial allocation right before assigning the root $r_i$,
we select a least satisfied agent $j \in \arg\min \{ s_{\pi'}(j) \colon j \in K \}$ and extend $\pi'$ by assigning root $r_i$ to agent~$j$. 
Second, we distribute the leaves of the stars by applying the same greedy idea as for the roots, paired with an exchange argument. 
Consider a least satisfied agent~$j$. 
If there is still an unassigned leaf of a star that is either trivial or whose root has not been assigned to $j$, assign such a leaf to~$j$. 
Otherwise, if all trivial stars (i.e., singletons) are assigned and every unassigned leaf is adjacent to a root assigned to $j$, then, as we will show below, there exists an agent $h \neq j$ who has received a leaf $w$ of a star that is either trivial or whose root was not assigned to~$j$.
Given such an agent $h$, we pick an unassigned leaf $v$ (whose root was assigned to $j$) and perform an exchange by assigning $v$ to $h$ and, in turn, reallocating $w$ to~$j$.
In this way, the satisfaction of $h$ remains unchanged, while an additional leaf is assigned to~$j$.
Therefore, as long as there are unassigned leaves, it is always possible to assign a leaf to the least satisfied agent.

We now show that such an agent $h \neq j$ for the above exchange does indeed exist for $k \geq 3$.
Suppose, for the sake of contradiction, that such an agent does not exist. 
Then all leaves of stars whose roots are assigned to agents other than $j$ and all singletons must already be assigned to $j$ and $j$ is assigned at least one root. 
Hence, the least satisfied agent $j$ dominates all leaves of the nontrivial stars and all singletons and every agent in the generated allocation $\pi$ has satisfaction at least $1 +  a+\sum_{i=1}^{\ell} a_i$. 
As we will now show, this can only be the case if there are just two agents, contradicting our assumption of $k 
 \geq 3$. 
Observe that the total sum of agent satisfactions is at most $\ell + a + 2 \sum_{i=1}^{\ell} a_i$. 
Since every agent has satisfaction at least $1 +  a+\sum_{i=1}^{\ell} a_i$, we have that the total satisfaction is at least $k\left(1 +  a+\sum_{i=1}^{\ell} a_i\right)$, thus 
\begin{align*}
\sum_{j \in K}s_\pi(j)\geq k\left(1 +  a+\sum_{i=1}^{\ell} a_i\right) \geq 3 + 3a+ 3 \sum_{i=1}^{\ell} a_i =\\
=3+3a+ \sum_{i=1}^{\ell} a_i+ 2\sum_{i=1}^{\ell} a_i> a+ \ell+ 2\sum_{i=1}^{\ell} a_i \geq \sum_{j \in K}s_\pi(j),
\end{align*}
a contradiction.
Hence, agent $h$ must exist.

\medskip

At the end of this process, which clearly can be performed in polynomial time, all vertices of $G$ are assigned. In the remainder of the proof  we show  optimality of the allocation by distinguishing two cases.

\medskip

\noindent\textbf{Case 1.} \emph{Every agent receives at most one root by the greedy algorithm.}
Let $S^*$ be the (unknown) optimal value of the optimization problem of maximizing the minimum satisfaction among the agents. 
Let $R^*:= \{i \mid  a_i \geq S^*-1,  i \in \{1,\ldots,\ell\}\}$ denote the set of indices of roots giving, when assigned, a satisfaction at least~$S^*$.
W.l.o.g.\ we will assume that $|R^*|<k$, because otherwise assigning each agent one root from $R^*$ (as it would also be done by the greedy algorithm) already gives a partial allocation with optimal minimum satisfaction.

Let~$k':=k-|R^*|$.
Note that in any allocation, at least $k'$ agents do not receive any roots from~$R^*$.
In any optimal allocation, the total satisfaction remaining without the allocation of roots in $R^*$ has to suffice to give at least those $k'$  agents a satisfaction of at least~$S^*$. 
Hence, we get the following:
\begin{equation}\label{eq:fulfill}
    \sum_{i=1}^\ell (2a_i+1)+a -
    \sum_{i\in R^*} (a_i +1) 
    \geq k'\cdot S^*\,. 
\end{equation}

Now consider the partial greedy allocation $\pi'$ produced by the greedy algorithm after having assigned all the roots.
Let $A'$ denote the set of agents for which the satisfaction with respect to $\pi'$ is less than $S^*$, i.e., $ A'=\{ j \in K\colon  s_{\pi'}(j) < S^*\}$. 
We have $|A'|= k-|R^*|=k'$, since the greedy algorithm gives $|R^*|$ agents satisfaction at least $S^*$ by assigning them the roots in $R^*$, while all other agents receive at most one root by the assumption of this first case, and thus fail to reach $S^*$ in $\pi'$.

We argue that our greedy algorithm yields an optimal solution by showing that the number of unassigned leaves and singletons suffices to give all agents in $A'$ satisfaction of at least~$S^*$. 

To do so, we calculate, for all agents in $A'$, the total number $c$ of leaves and singletons required to reach $S^*$ by subtracting the satisfaction reached through the allocated root vertices from the desired target satisfaction.
We obtain:
\begin{eqnarray}
    c&=&\sum_{j \in A'} \left(S^*-\sum_{r_i \in \pi'(j)} (a_i +1)\right)\nonumber\\
    &=&|A'|\cdot S^*-\sum_{j \in A'} \sum_{r_i \in \pi'(j)} (a_i +1)\nonumber\\
    &=& k'\cdot S^*-\sum_{j \in A'} \sum_{r_i \in \pi'(j)} (a_i +1)\nonumber\\
    &\leq& 
    \sum_{i=1}^\ell (2a_i+1) + a -
     \sum_{i \in R^*} (a_i +1) 
  - \sum_{j \in A'}\sum_{r_i \in \pi'(j)} (a_i +1)
     \label{eq:plugin}\\ 
       &=&  \sum_{i=1}^\ell a_i +a \label{eq:allroots}
\end{eqnarray}
where (\ref{eq:plugin}) follows from plugging in (\ref{eq:fulfill}) and (\ref{eq:allroots}) is given by the fact that every root is either in $R^*$ or it was allocated to an agent in~$A'$.

But this upper bound is exactly the number of leaves and singletons that are available to be distributed by the greedy algorithm. 
By the exchange argument these leaves and singletons can be evenly distributed in a way that the satisfaction of each agent reaches $S^*$ in the allocation $\pi$ generated by the greedy algorithm.

\medskip
\noindent\textbf{Case 2.} \emph{At least one agent receives more than one root by the greedy algorithm.}
Note that this implies $\ell >k$.   
For each $i\in \{1,\ldots, \ell\}$, we denote by $\pi_i$ the partial greedy allocation after assigning the first $i$ roots, let $K_i$ be the set of agents who were assigned in $\pi_i$ at least two roots, let $\underaccent{\bar}{s}_i = \min\{ s_{\pi_i}(j) \colon j \in K  \}$ and let $\bar{s}_i = \max\{ s_{\pi_i}(j) \colon j \in K_i  \}$.

We first show that for all $i\in \{k+1,\ldots, \ell\}$, we have 
\begin{equation}\label{eq:bars-underbars}
   \bar{s}_i - \underaccent{\bar}{s}_i \leq a_{k+1} + 1 
\end{equation} holds. 
We prove this by induction on~$i$.
For $i = k+1$, when assigning the $(k+1)$-st root in the greedy algorithm, we may assume without loss of generality that this root is assigned to the agent that received the $k$-th root and in this partial assignment we have $\bar{s}_{k+1} = (a_k +1) + (a_{k+1}+1)$ and $\underaccent{\bar}{s}_{k+1} \geq a_k+1$. 
Hence $\bar{s}_{k+1} - \underaccent{\bar}{s}_{k+1} \leq a_{k+1} + 1$.
Now let $k+1\le i<\ell$ and assume that $\bar{s}_i - \underaccent{\bar}{s}_i \leq a_{k+1} + 1$.
Note that after every further root is assigned, the minimum satisfaction value never decreases, that is, $\underaccent{\bar}{s}_{i+1}\ge \underaccent{\bar}{s}_i$. 
When assigning the root $r_{i+1}$ (recall that $a_{i+1}\leq a_{k+1}$), the root is assigned to an agent with satisfaction at least $\underaccent{\bar}{s}_i$.
Note that $K_i\subseteq K_{i+1}$ and therefore $\bar{s}_i \le \bar{s}_{i+1}$.
If $\bar{s}_{i+1} = \bar{s}_{i}$, then using the induction hypothesis 
we immediately obtain that  $\bar{s}_{i+1} - \underaccent{\bar}{s}_{i+1} \leq a_{k+1} + 1$. 
If $\bar{s}_{i+1}>\bar{s}_{i}$, then the least satisfied agent with respect to $\pi_i$ becomes the most satisfied agent in $K_{i+1}$ and its satisfaction with respect to $\pi_{i+1}$ is $\bar{s}_{i+1} = \underaccent{\bar}{s}_{i} + a_{i+1} + 1$.
Therefore, \[\bar{s}_{i+1} - \underaccent{\bar}{s}_{i+1} =
\underaccent{\bar}{s}_{i} + a_{i+1} + 1-\underaccent{\bar}{s}_{i+1} \le a_{i+1} + 1 \leq a_{k+1} + 1\,.\]
This proves \eqref{eq:bars-underbars}.
In particular, the partial assignment $\pi_\ell$ after assigning all roots satisfies $\bar{s}_\ell - \underaccent{\bar}{s}_\ell \leq a_{k+1} + 1$.

Let $K'= K\setminus K_\ell$ be the set of agents that are assigned exactly one root in $\pi_\ell$.
Let $K'' = \{j \in K' \colon s_{\pi_\ell}(j) \leq \bar{s}_\ell \}$.
Observe that the preference graph, since $\ell>k$, contains at least  $(k+1)\cdot a_{k+1}$ leaves. 
Using the fact that $k+1>|K|\ge |K''\cup K_\ell|$ and~\eqref{eq:bars-underbars} we infer that 
\[(k+1)\cdot a_{k+1} >(|K''\cup K_\ell|) (\bar{s}_\ell-\underaccent{\bar}{s}_\ell-1)\,.\]
These leaves are then distributed greedily to the agents, starting with the agents in $K'' \cup K_\ell$. 
Since the satisfaction of each of these agents is at least $\underaccent{\bar}{s}_\ell$, the total number of leaves suffices for lifting the satisfaction of every agent in $K'' \cup K_\ell$ to $\bar{s}_\ell-1$.
Since in the procedure above we showed that these leaves can be assigned so that the least satisfied agent receives an additional leaf in every step, in the end the difference in satisfaction between any two agents in $K''\cup K_\ell$ is at most~$1$.

Hence, our greedy strategy leads to an allocation $\pi$ which, when restricted to the agents in $K''\cup K_\ell$, is an optimal allocation for the reduced instance resulting from removing the roots assigned to some agents in $K'\setminus K''$. 
But then $\pi$ must also be optimal for our original instance, since each agent in $K' \setminus K''$ receives only one root, and has satisfaction at least as high as the minimum satisfaction among the agents in $K''\cup K_\ell$.
\end{proof}

It might be noted that the greedy strategy with exchanges stated above for the general case $k \geq 3$ does not work for the special case~$k=2$.
This can be shown by means of a simple example with~$a_1=10, a_2=a_3=a_4=1,a=0$. 
The general greedy strategy assigns root 1 to agent 1 and all the other roots to agent 2, resulting in satisfactions 16 and 14.
An optimal solution would give roots 1 and 2 to agent 1 and roots 3 and 4 to agent 2 resulting in a satisfaction of 15 for both agents.

\medskip
For a constant number of agents, we can also give a polynomial-time algorithm for general out-forests.
This generalizes considerably the result for out-stars given in Theorem~\ref{th:minmax-out-stars}, but at the cost of restricting~$k$.

\begin{theorem}\label{th:same-tree-const-MINMAX}
For any constant number $k$ of agents, {\MINMAX} can be solved in polynomial time if the preference graph $G$ is an out-forest.
\end{theorem}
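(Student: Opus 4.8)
The plan is to solve the equivalent problem of maximizing the minimum satisfaction by a dynamic program over the out-forest, invoking \Cref{structure-of-optimal-solutions} throughout to restrict attention to allocations $\pi$ in which every $\pi(i)$ is a nonempty antichain. The key structural fact is that in an out-tree the successor set $\succc[v]$ of a vertex $v$ is exactly the subtree $T_v$ rooted at $v$; hence an agent $i$ dominates $v$ if and only if the unique path from the root of $v$'s tree to $v$ meets $\pi(i)$. In particular, if $i$ holds some ancestor of $v$, then $i$ dominates all of $T_v$, and by the antichain assumption $i$ then holds no vertex of $T_v$ at all.

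First I would root each out-tree of $G$ and run a post-order dynamic program. For a vertex $v$ and a set $S\subseteq K$, let $g_v(S)$ be the collection of all vectors $(c_1,\dots,c_k)\in\{0,\dots,|T_v|\}^k$ for which there is an allocation of the vertices of $T_v$, with every $\pi(i)$ an antichain, such that --- assuming the set of agents holding some vertex of $\pred(v)$ is precisely $S$ --- agent $i$ dominates exactly $c_i$ vertices of $T_v$. The transition at $v$: guess the agent $\rho\in\{\bot\}\cup(K\setminus S)$ receiving $v$ (or $\bot$ for none); put $\hat S=S\cup\{\rho\}$ (read as $\hat S=S$ when $\rho=\bot$), which is exactly the set of agents dominating every child of $v$ from above; then form, coordinatewise, all sums of one vector taken from $g_w(\hat S)$ over the children $w$ of $v$, and finally add the indicator vector $(\mathbbm{1}[i\in\hat S])_{i\in K}$ to account for $v$ itself. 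The base case is a leaf, where $g_v(S)=\{(\mathbbm{1}[i\in S])_i\}\cup\{(\mathbbm{1}[i\in S\cup\{\rho\}])_i : \rho\in K\setminus S\}$. Writing $G$ as a disjoint union of out-trees with roots $r_1,\dots,r_m$, and noting that a root has no vertex above it, the set of count-vectors realizable over all of $G$ is the coordinatewise sumset of $g_{r_1}(\emptyset),\dots,g_{r_m}(\emptyset)$; the {\MINMAX} instance admits an allocation with maximum dissatisfaction at most $d$ precisely when one of these vectors has all coordinates at least $n-d$.

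Correctness would follow by induction on the tree, the only delicate point being that the same set $\hat S$ is handed to all children of $v$ while the allocations inside the pairwise disjoint subtrees $T_w$ are chosen independently --- which is exactly what makes the coordinatewise sum across children the right combining operation. One also checks that an agent has positive count only when it actually receives a vertex, so the nonemptiness clause of \Cref{structure-of-optimal-solutions} is automatically met whenever $n-d\ge 1$ (and the case $n-d\le 0$ is trivial). As for efficiency: for constant $k$ each $g_v(S)$ has at most $(n+1)^k$ vectors, there are $2^k$ choices of $S$ and at most $k+1$ of $\rho$, and each coordinatewise-sum step costs $O((n+1)^{2k})$; since the total number of children over all vertices is $O(n)$, the whole computation --- including the final combination over the $m\le n$ trees --- runs in time polynomial in $n$ for fixed $k$. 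Recording one witnessing allocation per stored vector makes the algorithm constructive, as in our other polynomial-time results.

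I expect the main obstacle to be the ``nesting'' of dominated regions: an agent may be given a vertex that is an ancestor of a vertex given to another agent, so the subtrees claimed by different agents need not be disjoint and the satisfactions are not obtained by summing over a partition of $V$. Carrying along, at each vertex, the set $S$ of agents that already dominate it from above is what copes with this, and it is precisely this $2^k$-sized component of the state --- together with the $(n+1)^k$ possible count-vectors --- that forces $k$ to be a constant. The statement also follows from the bounded-treewidth result for individual preference graphs mentioned above, since an out-forest is a forest and hence has treewidth one; I would nevertheless prefer the direct dynamic program because it is self-contained and yields the allocation explicitly.
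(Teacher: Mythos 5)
Your proposal is correct and follows essentially the same route as the paper: a bottom-up dynamic program over the out-forest that stores, for each subtree $T_v$, the set of at most $(n+1)^k$ achievable $k$-vectors of per-agent (dis)satisfaction, combines children by coordinatewise sumsets, and guesses which agent (if any) receives the current vertex, with \Cref{structure-of-optimal-solutions} justifying the restriction to nonempty antichains. The only difference is cosmetic: the paper dispenses with your extra state component $S$ of ancestor-holding agents by simply resetting the dissatisfaction of the agent receiving $u$ to $0$ within $T_u$ (overriding the children's values), so the additional $2^k$ factor in your running time is avoidable.
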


\begin{proof}
We denote by $n$ the number of vertices of~$G$.
For a vertex $v$, we denote by $T_v$ the out-tree rooted at $v$, that is, the subgraph of $G$ induced by $v$ and all its successors.
A \textit{dissatisfaction profile} is a $k$-tuple $(d_i\colon  i\in K)$ with $d_i \in \mathbb N_0$ for all $i\in K$ such that there is an allocation $\pi$ with  $\diss_{\pi}(i)=d_i$ for each $i\in K$.
Let ${\it DP}[v]$ be the set of all dissatisfaction profiles for~$T_v$.
Since a dissatisfaction profile is a vector of length $k$ whose elements have values between $0$ and $n$, the set ${\it DP}[v]$ contains at most $(n+1)^k$ profiles.

The \emph{depth} of a vertex $v$ is the number of vertices on the unique maximal directed path in $G$ ending at~$v$.
We observe that one can preprocess the input out-forest $G$ by only considering the vertices at depth at most $k$ in each out-tree. 
To verify this, observe that as a consequence of~\Cref{structure-of-optimal-solutions} we can assume that each agent receives at most one item from each path in $G$ in an optimal allocation $\pi$. We can iteratively strictly decrease the number of vertices with maximum depth $>k$  that are assigned to some agent (if such vertices exist) without making any agent worse off: take such a vertex $w$ assigned to some agent $i$ and the path $p$ from the root to~$w$. On $p$, by the above property, there must be at least one vertex $v\not=w$  that is not assigned to any agent; now, modify $\pi$ by allocating $v$ to agent $i$ and removing all items of $\text{succ}(v)$ from the set $\pi(i)$. 

Furthermore, note that if a vertex is not assigned to any agent in some solution, then assigning this vertex to any agent would also yield a solution.

Fix a vertex $u$ of~$G$.
Let $v_1,\dots,v_q$ be the out-neighbors of $u$, let $\pi_u$ be an arbitrary allocation for $T_u$ and let $d_u$ be the dissatisfaction profile of $\pi_u$.
We may assume that $u$ is assigned to some agent $j$ with respect to $\pi_u$, that is, $u \in \pi_u(j)$, and that no other vertex in $T_u$ is assigned to agent $j$; otherwise, we can remove this item from $\pi_u(j)$ without changing the dissatisfaction of any agent.
The dissatisfaction for agent $j$ with respect to $\pi_u$ is $0$, since $u \in \pi(j)$ and $u$ is the most preferred item in~$T_u$.
Moreover, for every agent $j' \neq j$, its dissatisfaction with respect to $\pi_u$ corresponds to the sum of the agent's dissatisfactions over all of the subtrees $T_{v_i}$ plus $1$, since $u \notin \pi_u(j')$.
Hence, we obtain that $d_u[j] = 0$ and, for all $j' \in K$ such that $j' \neq j$, $d_u[j'] = 1+ \sum_{i = 1}^q d_{v_i}[j']$, where $d_{v_i}$ corresponds to the dissatisfaction profile of the out-tree $T_{v_i}$ with respect to $\pi_u$.

We claim that ${\it DP}[u]$, the set of all dissatisfaction profiles for $T_u$, can be computed in polynomial time assuming that ${\it DP}[v_i]$ is known for all out-neighbors $v_i$ of~$u$.
The procedure is as follows.
If $u$ is a leaf, that is, $u$ has no out-neighbor, then ${\it DP}[u]$ is the set containing exactly $k+1$ profiles such that each profile is a vector of length $k$ with at most one element equal to $0$ and the other elements are equal to $1$; this assumes that $d \geq 1$, otherwise either $k = 1$ and the problem becomes trivial or there is no solution.
On the other hand, if $u$ is not a leaf, then let $S = \{0\}^k$, that is, $S$ contains a unique vector of length $k$ all of whose elements are~$0$.
Then, for each out-neighbor $v_i$ of $u$, we add to $S$ the profiles $d_S + d_i$ for all $d_S \in S$ and all $d_i \in {\it DP}[v_i]$.
We can detect duplicate profiles in $S$ in constant time, for instance by using a Boolean array $B$ with $(n+1)^k$ cells, where each cell corresponds to a dissatisfaction profile,
such that a dissatisfaction profile $d_i$ belongs to $S$ if and only if the cell corresponding to profile $d_i$ in $B$ is set to~$true$.
Hence, each time a new profile is computed, we can detect and ignore duplicates in $S$, which implies that at every step of the procedure, $|S| \leq (n+1)^k$.
For each neighbor $v_i$, since $|S|$ and $|{\it DP}[v_i]|$ have size at most $(n+1)^k$, it takes $\mathcal{O}(|S| \cdot |{\it DP}[v_i]|) = \mathcal{O}((n+1)^{2k})$ time to update~$S$.
Hence, once all neighbors of $u$ have been considered, it takes at most $\mathcal{O}({\it deg}^+(u)(n+1)^{2k})$ time to compute $S$, where ${\it deg}^+(u)$ denotes the out-degree of~$u$.
Note that, at the end of the procedure, $S$ corresponds exactly to the dissatisfaction profiles of the out-forest obtained from $T_u$ by removing~$u$.
Now, we explain how to construct the set ${\it DP}[u]$ of all dissatisfaction profiles for~$T_u$.
We iterate over $j \in K$ and for each $d_S \in S$, we create a profile $d_u$ such that $d_u[\ell] = d_S[\ell]+1$ if $\ell \neq j$, and~$d_u[j] = 0$.
Clearly, $d_u$ is a dissatisfaction profile for $T_u$ corresponding to an allocation $\pi_u$ such that $u \in \pi_u(j)$.
At the end of this loop, ${\it DP}[u]$ contains exactly all dissatisfaction profiles for~$T_u$.
Thus, if ${\it DP}[v_i]$ is known for all out-neighbors $v_i$ of $u$, then ${\it DP}[u]$ can be obtained in time $\mathcal{O}((n+1)^{2k+1})$.

Using a dynamic programming approach, we can compute the set ${\it DP}[v]$ for all vertices $v$ of $G$ in time $\mathcal{O}(n(n+1)^{2k+1}) = \mathcal{O}((n+1)^{2k+2})$.
Then, to decide whether there exists a solution for $G$, it suffices to compute the set of all dissatisfaction profiles for $G$ with a similar approach as the one we described to compute ${\it DP}[u]$ for a fixed $u$ and check the existence of a profile whose values are all at most~$d$.
Thus, {\MINMAX} can be solved in $\mathcal{O}((n+1)^{2k+2})$ time.
\end{proof}

The above theorem follows also as a special case of the general result for graphs of bounded treewidth stated in Theorem~17 and Remark~18 in \cite{general}.
However, for self-containment we gave an explicit construction of an elementary dynamic programming algorithm.

\section{Fixed parameter tractability based on modular partitions}\label{sec:modular}

The following result modifies the graph concept of bounded neighborhood diversity introduced by Lampis~\cite{Lampis12}.
In a (directed) graph $G=(V,A)$ a set $X \subseteq V$ is a \emph{module} if each vertex in $V \setminus X$ has a homogeneous relationship to $X$, i.e., for all $x, y \in X$ it holds that $N^-(x) \cap (V \setminus X) = N^-(y) \cap (V \setminus X)$ and $N^+(x) \cap (V \setminus X) = N^+(y) \cap (V \setminus X)$.
A partition of the vertex set $V$ of a graph $G = (V,E)$ into modules $X_1, \dots, X_d$, i.e., the $X_i$ are pairwise disjoint and $ \bigcup_{1 \leq i \leq d} X_i = V$, is called a \emph{modular partition} of~$G$. 
We call $d$ the \emph{size} of the modular partition. Typically, one restricts the types of induced subgraphs $G[X_i]$ to obtain modular partitions into certain module types.
One well-studied example is partitioning the vertex set of an undirected graph into clique modules and independent set modules, as in the definition of the neighborhood diversity.
We look at a partition into \emph{path modules} and independent set modules, hence the subsets $X_i$ in the definition above are either independent sets or directed paths.
Clearly, cliques are not very meaningful in the setting of preferences and partial orders. 
However, ordered paths are a decisive element of expressing preferences and fit well into the concept of modular partition.

The following theorem shows that such a partition is unique and can be computed efficiently.

\begin{theorem}\label{th:module-decomp}
A partition of a graph $G=(V,A)$ into a minimum number of path modules and independent set modules is unique and can be computed in $\mathcal{O}(|V|+|A|)$ time.
\end{theorem}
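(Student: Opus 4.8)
The plan is to build the partition by a bottom-up computation on the \emph{modular decomposition tree} of $G$; its canonicity is what yields uniqueness, and the computation is light enough for the linear time bound.

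I would start from standard facts about modules of a directed graph: if $M_1,M_2$ are modules with $M_1\cap M_2\neq\emptyset$ then $M_1\cup M_2$ is a module, and if $M$ is a module of $G$ and $M'$ is a module of $G[M]$ then $M'$ is a module of $G$; I would also use that reachability lifts through module quotients, so a part reaches another in $G[M]$ exactly when the corresponding vertices are reachable in the quotient. Then I invoke the modular decomposition tree $T$ of $G$, which is unique and can be computed in $\mathcal{O}(|V|+|A|)$ time (linear-time modular decomposition is available for digraphs as well). Its nodes are the strong modules of $G$; each internal node carries a quotient that is either \emph{degenerate} --- for a DAG, either edgeless (``parallel'') or a transitive tournament (``linear'') --- or \emph{prime}; and every module of $G$ is a node of $T$ or, at a degenerate node, a union of some of its children, whereas a prime node contributes no cross-child module other than itself.

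The algorithm processes $T$ bottom-up, maintaining at each node $t$ (with module $M_t$) an optimal partition $\Pi_t$ of $M_t$ into path and independent-set modules, together with the two predicates ``$G[M_t]$ is edgeless'' and ``$M_t$ is a path module'' (i.e.\ its reachability order is total). A leaf gives $\Pi_t=\{M_t\}$, and both predicates hold. At a parallel node, all children whose module is edgeless are pooled into one independent-set module and the rest keep their $\Pi_{t_i}$; $M_t$ is then edgeless iff all children are, and is never a path module. At a linear node with children ordered $t_1,\dots,t_r$ along the tournament, each inclusion-maximal block of consecutive children that are path modules is merged into one path module and the remaining children keep $\Pi_{t_i}$; $M_t$ is a path module iff every child is. At a prime node, $M_t$ is a path module iff the quotient has a unique topological order and every child is a path module --- this is where reachability-lifting enters --- in which case $M_t$ is taken as a single part; otherwise one takes $\bigcup_i\Pi_{t_i}$, and $M_t$ is never edgeless. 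Each of these choices is \emph{forced}: pooling or merging always strictly decreases the number of parts, and by the description of the modules of $G$ the only sets that can straddle several children of $t$ while still being path or independent-set modules are precisely the ones we merge; hence $\Pi_t$ is the unique optimum for $M_t$ given the (inductively unique) $\Pi_{t_i}$. Reading off $\Pi$ at the root yields the unique minimum partition of $V$, and since $T$ and every per-node test (edgelessness, and uniqueness of a topological order) cost linear time in total, the procedure runs in $\mathcal{O}(|V|+|A|)$.

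The crux should be the prime-node case: showing that when $M_t$ is neither edgeless nor a path module, no partition of $M_t$ can beat $\bigcup_i\Pi_{t_i}$. One must rule out, for instance, absorbing a ``top'' path-submodule of one child into a path module spanning several children, which uses reachability-lifting together with the fact that a prime node offers no cross-child module other than $M_t$ itself. A secondary point is pinning down the equivalence ``$M_t$ is a path module $\iff$ its reachability order is total'' and its behavior under the three tree operations, since the path modules one may use need not be strong modules and a prime quotient can still be totally reachable (e.g.\ a chordless directed path).
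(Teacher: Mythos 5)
Your proposal shares only its starting point with the paper's proof --- both rest on the linear-time computation of the unique modular decomposition tree --- but the routes then diverge sharply. The paper runs no bottom-up dynamic program at all: it observes that path modules and independent set modules decompose into singletons only (for a directed path, no proper subset of two or more vertices is a module of the path), so such a module, as a tree node, has only leaves as children; the partition is then read off by a single local rule, namely, for each leaf take its parent node if that node's module induces a path or an independent set, and the singleton otherwise. Your per-node-type case analysis is an order of magnitude heavier, and the extra machinery is exactly where the problems sit.

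The concrete gap is that you have quietly changed the definition of a path module. The paper's parts must \emph{induce a directed path}, whereas you take ``path module'' to mean that the reachability order on $M_t$ is total (you even flag this as an equivalence to be pinned down --- it is false). Under the paper's definition your linear-node rule fails: the union of consecutive children of a linear node carries \emph{all} cross arcs, so merging three consecutive singleton children produces a transitive tournament on three vertices (three arcs), which is a module but not a directed path; likewise the prime-node test ``the quotient has a unique topological order'' accepts chains that are not induced paths. Your algorithm therefore computes a minimum partition into \emph{chain} modules and independent set modules, a different object from the one in the statement. Separately, the ``each choice is forced'' claims, which carry the entire uniqueness assertion, are only asserted: at a linear node you must rule out a part that meets a child in a proper nonempty sub-module completely dominated by the rest of that child, and this is precisely the crux you defer. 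Finally, note that your parallel-node pooling and the paper's leaf-parent rule do not even produce the same partition on all inputs (take two isolated vertices together with one disjoint arc: you pool the isolated vertices into one independent set module, the paper's rule keeps them as two singletons), so before reconciling your argument with the paper's you will need to settle which notion of path module, and hence which partition, the theorem is actually about.
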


\begin{proof}
We use the unique modular decomposition tree of the graph $G=(V,A)$, which can be obtained in linear time using the algorithm of McConnell and Montgolfier~\cite{mcconnell2005linear}. 
Observe that both path modules and independent set modules uniquely decompose into singletons in the modular decomposition.
(Clearly, no proper subset of at least two path vertices, adjacent or not, can be a module to the other vertices of the ordered path.)
Hence, using the modular decomposition tree, 
we obtain our partition into path modules and independent set modules by iterating over all 
leaves of the modular decomposition tree which correspond to the singleton sets. 
If the parent module of such a singleton is a path module or an independent set module, we add
this module to our partition. 
Otherwise, we add the singleton module to the partition.
The uniqueness and minimality of this partition follows from the uniqueness of the modular decomposition tree and the above observation that path modules and independent set modules only decompose into singleton modules.
\end{proof}

\begin{theorem}\label{thm:nd}
{\MINMAX} is fixed parameter tractable with respect to $k+d$, where $k$ is the number of agents and $d$ is the minimum number of path modules and independent set modules partitioning~$V(G)$.
\end{theorem}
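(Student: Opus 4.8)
The plan is to set up an integer linear program (ILP) of size bounded by a function of $k+d$ and invoke Lenstra's theorem on integer programming in fixed dimension. First I would exploit \Cref{structure-of-optimal-solutions}: we may assume each agent receives a nonempty antichain. Now fix the modular partition $X_1,\dots,X_d$ into path modules and independent set modules, which is unique and computable in linear time by \Cref{th:module-decomp}. The key structural observation is that the effect of allocating a vertex $v$ on the dissatisfaction of an agent is entirely determined by $\succc[v]$, and since each $X_i$ is a module, the ``outside reach'' of any vertex in $X_i$ is the same for all vertices of $X_i$; inside a path module, the successor set of the $j$-th vertex on the path is a known prefix-complement. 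So an agent's allocation can be summarized, up to its effect on dissatisfaction, by a bounded-size ``type'': for each independent set module $X_i$, the number of vertices of $X_i$ the agent gets (only $0$ vs. $\ge 1$ matters for domination of $X_i$ itself, but the exact count matters for how many vertices remain for other agents), and for each path module $X_i$, the topmost position the agent receives on that path (and, since agents may receive more than one vertex of a path, we record the set of positions — but only the minimum position affects domination, so it suffices to record, for each agent and each path module, the highest position taken, plus how many path vertices that agent consumed in total).

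The core of the argument is therefore a counting/feasibility ILP. I would introduce integer variables recording, for each agent $\ell\in K$ and each module $X_i$, how that agent interacts with $X_i$ — concretely, for an independent module $X_i$ of size $n_i$, a variable $x_{\ell,i}\in\{0,\dots,n_i\}$ for the number of vertices of $X_i$ given to agent $\ell$; for a path module $X_i$ with vertices $p_{i,1}\succ\cdots\succ p_{i,n_i}$, a variable $t_{\ell,i}\in\{0,1,\dots,n_i\}$ for the index of the topmost vertex of that path allocated to agent $\ell$ ($0$ meaning none), together with the count of path vertices consumed. The number of such variables is $O(kd)$, i.e., bounded in the parameter. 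The constraints are: (i) for each module, the total number of vertices allocated across all agents does not exceed $n_i$ (with the subtlety for paths that choosing topmost index $t_{\ell,i}$ forces nothing below it but allows any subset below, so the ``consumed'' count is a free variable between $1$ and $n_i - t_{\ell,i}+1$ when $t_{\ell,i}>0$); (ii) every agent receives at least one vertex overall (nonempty antichain); (iii) the antichain condition — within a single path module an agent may take several vertices since a path is a chain, but then only the topmost matters, so in fact we should by \Cref{structure-of-optimal-solutions} forbid an agent from taking two vertices of the same path, simplifying (i) to $\sum_\ell [t_{\ell,i}>0]\le n_i$ plus indices distinct; across different modules antichain-ness is automatic because distinct modules are mutually ``non-reaching'' unless one module dominates another as a block, which I would handle by noting domination between modules is again homogeneous and can be read off a bounded ``quotient DAG'' on $d$ nodes; (iv) the dissatisfaction bound: for each agent $\ell$, $\diss_\pi(\ell)=n-s_\pi(\ell)\le d$, where $s_\pi(\ell)$ is a linear expression in the type variables — namely the sum over modules $X_i$ of the number of vertices of $X_i$ dominated by agent $\ell$, which for independent modules is $n_i$ if agent $\ell$ holds a vertex dominating $X_i$ from outside or holds a vertex of $X_i$ (else the number it holds), and for path modules is $n_i - t_{\ell,i}+1$ if $t_{\ell,i}>0$ (a clean linear function) plus the block-domination contributions from the quotient DAG. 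Because all module-to-module domination is homogeneous, $s_\pi(\ell)$ is a linear function of the $O(kd)$ variables with coefficients computable in polynomial time.

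Having written this feasibility ILP in $O(kd)$ variables with coefficients of polynomial bit-length, I would apply Lenstra's algorithm (integer programming in fixed dimension is solvable in time polynomial in the input size once the dimension is fixed), which runs in $f(kd)\cdot\mathrm{poly}(n)$ time, giving the claimed fixed-parameter tractability; a feasible point is then decoded into an actual allocation in polynomial time by filling in, for each module, which specific vertices go to which agent consistently with the counts and topmost indices (straightforward greedy filling, using that within a path only the topmost index is constrained). The main obstacle I anticipate is getting the bookkeeping of \emph{path modules} exactly right: a path is a chain, so an agent taking several of its vertices is wasteful but not illegal, and I must argue cleanly (via \Cref{structure-of-optimal-solutions}) that WLOG each agent takes at most one vertex per path, and then verify that the satisfaction contribution of a path module is the simple linear function $n_i-t_{\ell,i}+1$ — including correctly handling the case where the path module is itself dominated as a whole by a vertex some agent holds in another module, and the case where a path module dominates other modules. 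Making the module-to-module domination relation precise (the ``quotient DAG'') and checking it is well-defined by homogeneity is the one place where I would need to be careful rather than routine.
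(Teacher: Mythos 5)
Your plan takes a genuinely different route from the paper. The paper's proof never writes a single global ILP: it enumerates, for each path module, one of the $k!$ top-down orderings of the agents (after truncating each path to its top $k$ positions), and for each independent set module a subset of agents touching it, giving $\mathcal{O}((k!)^d\cdot 2^{kd})$ guesses; each guess fixes a partial allocation, and the remaining vertices of the independent set modules are distributed by a maximum-flow computation (agents on one side, modules on the other, source capacities $\sigma-\sigma_j$) combined with a search over the target satisfaction $\sigma$. Your proposal instead encodes everything into one integer program in $\mathcal{O}(kd)$ variables and invokes Lenstra; this is closer in spirit to the paper's \emph{other} FPT result (independent set modules only, arbitrary $k$), and if it worked as stated it would be arguably cleaner. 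Your structural observations are sound: block domination between modules is well defined (if an outside vertex reaches one vertex of a module it reaches all of them, by homogeneity of in-neighborhoods), and by \Cref{structure-of-optimal-solutions} each agent takes at most one vertex per path module, with only the topmost position mattering.

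The gap is in the claim that the feasibility conditions are \emph{linear} in your $\mathcal{O}(kd)$ integer variables. Three places break linearity as written. First, the satisfaction contribution of a module $X_i$ to agent $\ell$ is $n_i$ whenever $\ell$ holds a vertex in some module that block-dominates $X_i$, and otherwise $x_{\ell,i}$ (independent case) or $n_i-t_{\ell,i}+1$ (path case); this is a conditional/maximum of linear expressions, with the condition itself depending on indicators such as $[x_{\ell,j}\ge 1]$ and $[t_{\ell,j}>0]$. Second, the requirement that the positions $t_{\ell,i}$ chosen by different agents on the same path be pairwise distinct is a disjunction, not a linear constraint. Third, the cross-module antichain condition ($\ell$ may not hold vertices in both $X_j$ and a module dominated by $X_j$) is again conditional. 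None of this is fatal: you can restore linearity either by adding $\mathcal{O}(kd)$ (respectively $\mathcal{O}(k^2d)$ for distinctness) auxiliary binary variables with big-$M$ constraints, keeping the dimension bounded in $k+d$ so Lenstra still applies, or by an outer enumeration of each agent's ``signature'' (which modules it touches) and of the agent ordering on each path --- but the latter essentially reinstates the paper's guessing step. You flagged the path bookkeeping and the quotient DAG as the delicate points, which is the right instinct, but the proof is not complete until one of these linearization devices is spelled out; as literally written, the assertion that $s_\pi(\ell)$ is a linear function of the type variables is false.
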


\begin{proof}
After computing the partition of $V(G)$ according to Theorem~\ref{th:module-decomp} we first consider the path modules. 
Since every agent will receive at most one vertex from every path module (by \Cref{structure-of-optimal-solutions}), we can limit all paths to have at most $k$ vertices and ignore additional vertices.
For paths with fewer than $k$ vertices we add dummy vertices whose allocation to an agent represents the fact that no vertex of this path is allocated to this agent. 
For each path module we {\em guess} one of the $k\,!$ orderings of the agents. 
Each such guess represents the ordering in which all the $k$ path vertices are assigned to the agents in a top-down way (dummy vertices mean that no vertex of that path is assigned to the corresponding agent).
Considering all guesses over all at most $d$ path modules yields $\mathcal{O}((k!)^d)$ possibilities for the allocation of path vertices.

Next, we consider the $d' \leq d$ independent set modules and {\em guess} for each of them and for each agent whether that agent is assigned at least one vertex of that module.
Clearly, there are at most $(2^k)^d$ possible guesses.

For each independent set module $M_i$, $i=1, \ldots, d'$, let $k(M_i)\leq k$ denote the number of agents receiving at least one vertex by the current guess.
Each of these $k(M_i)$ agents is allocated an arbitrary single vertex of $M_i$, if this is possible. 
By the module property, the actual choice is irrelevant.

Thus, together with the allocation of path vertices, every guess results in a partial allocation which may also include assignments of dominated vertices.
Since we restrict our computation to antichains (cf.~Lemma~\ref{structure-of-optimal-solutions}), we remove all dominated vertices to obtain a valid candidate consisting of a partial allocation with a satisfaction value $\sigma_j$ for every agent $j\in K$.
It remains to distribute the remaining vertices from all independent set modules in order to maximize the minimum satisfaction value.
This objective value is determined by iterating over all possible target satisfaction values $\sigma$.
Note that an allocation by guessing is out of the question since the size of a module $M_i$ can be $\Theta(|V(G)|)$.
Instead, the completion of the partial allocation is done by means of the following auxiliary maximum network flow model.

The network starts with a source node $s$ connected to all nodes $a_1, \ldots, a_k$ with $a_j$ representing agent~$j$. 
Each arc $(s, a_j)$ has capacity $\sigma-\sigma_j$ (agents with satisfaction $\sigma_j \geq \sigma$ are omitted in the flow model).
For every independent set module $M_i$, $i=1, \ldots, d'$, we add a node $b_i$ connected to a sink node $t$ where the capacity of the arc $(b_i,t)$ equals the number of unassigned vertices in $M_i$, i.e.,~$|M_i|-k(M_i)$.
Every agent node $a_j$ is connected (with infinite capacity) to every module node $b_i$ if $j$ already receives vertices from $M_i$ in the current guess (which implies that additional vertices from $M_i$ could be assigned to $j$).
If the maximum $s,t$-flow in this network has a value equal to $\sum_{j=1}^k (\sigma-\sigma_j)$, i.e., all arcs leaving the source $s$ are saturated, then the flow value on each arc $(a_j, b_i)$ corresponds to the number of (arbitrarily chosen) vertices from the independent set $M_i$ allocated to agent~$j$.
The resulting allocation has satisfaction value $\sigma$ for each agent.
Iterating over $\sigma$ (or performing binary search) one can find the best possible satisfaction reachable by the current guess.
Evaluating all guesses will solve {\MINMAX}.
\end{proof}

With our current techniques it does not seem to be possible to get rid of the exponential 
dependence on $k$ in the running time of the algorithm.
In fact, the complexity of {\MINMAX} is open even in the case when $G$ consists of three paths, with no connections between them (see Section~\ref{sec:conc}).
But if we consider a partition into independent set modules only, we are able to obtain an FPT algorithm for an arbitrary number of agents. 
Note that such a partition can be computed in linear time using the same techniques as used in the proof of \Cref{th:module-decomp}.
In the case of undirected graphs, maximal independent set modules were studied by Heggernes, Meister, and Papadopoulos~\cite{MR2857693} and by Lozin and Milani{\v c}~\cite{MR2658491} (under the name \emph{similarity classes}).

\begin{theorem}
For any number $k$ of agents, {\MINMAX} is fixed parameter tractable with respect to $d$, where $d$ is the minimum number of independent set modules partitioning~$V(G)$.
\end{theorem}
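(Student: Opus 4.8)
The plan is to reduce the problem, for each candidate value $\sigma$ of the minimum satisfaction, to an integer linear program whose number of variables is bounded by a function of $d$ alone, and then to invoke Lenstra's algorithm for integer programming in fixed dimension.

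First I would set up the quotient structure. Let $M_1,\dots,M_d$ be the independent-set-module partition of $V(G)$, put $n_i:=|M_i|$, and form the \emph{quotient} digraph $Q$ on $\{1,\dots,d\}$ with an arc $i\to j$ whenever some vertex of $M_i$ has an arc to some vertex of $M_j$ (by the module property this then holds for \emph{all} such pairs of vertices). Since $G$ is acyclic and each $M_i$ is an independent set, $Q$ is acyclic, and for every $v\in M_i$ one checks that $\succc[v]=\{v\}\cup\bigcup_{i'\in R(i)}M_{i'}$, where $R(i)$ is the set of indices strictly reachable from $i$ in $Q$. By \Cref{structure-of-optimal-solutions} we may restrict to allocations in which each $\pi(j)$ is a nonempty antichain of $G$; for such an allocation the set $T_j\subseteq\{1,\dots,d\}$ of modules from which agent $j$ receives a vertex is then a \emph{nonempty antichain of $Q$}, and, writing $c_{i,j}:=|\pi(j)\cap M_i|\ge 1$ for $i\in T_j$, one obtains the clean formula $s_\pi(j)=\mathrm{base}(T_j)+\sum_{i\in T_j}c_{i,j}$, where $\mathrm{base}(A):=\sum_{i\in\overline A\setminus A}n_i$ and $\overline A$ is the set of indices reachable from $A$ in $Q$ (including $A$ itself). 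This identity — which crucially uses that no module of $T_j$ is dominated ``for free'', precisely because $T_j$ is an antichain of $Q$ — is the structural heart of the proof.

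Since $Q$ has at most $2^d$ antichains, I would aggregate agents by type. Introduce variables $x_A\in\mathbb Z_{\ge 0}$, the number of agents $j$ with $T_j=A$, for every antichain $A$ of $Q$, and $z_{A,i}\in\mathbb Z_{\ge 0}$, the total number of vertices of $M_i$ assigned to those agents, for every such $A$ and every $i\in A$. For a fixed target $\sigma$, an allocation with $\min_j s_\pi(j)\ge\sigma$ exists if and only if the integer system $\sum_A x_A=k$, $\,x_\emptyset=0$, $\,z_{A,i}\ge x_A$ for all $A$ and $i\in A$, $\,\sum_{i\in A}z_{A,i}\ge x_A\,(\sigma-\mathrm{base}(A))$ for all $A$, and $\,\sum_{A\ni i}z_{A,i}\le n_i$ for all $i$, is feasible. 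The forward implication is immediate from the satisfaction formula (set $z_{A,i}=\sum_{j:T_j=A}c_{i,j}$). For the converse, within each type $A$ one first hands one vertex of $M_i$ to each of the $x_A$ agents for every $i\in A$ (possible since $z_{A,i}\ge x_A$), bringing every such agent to satisfaction $\mathrm{base}(A)+|A|$, and then distributes the remaining $z_{A,i}-x_A$ vertices greedily among the type-$A$ agents still below $\sigma$; a short counting argument using $\sum_{i\in A}z_{A,i}\ge x_A(\sigma-\mathrm{base}(A))$ shows every agent reaches $\sigma$, while $\sum_{A\ni i}z_{A,i}\le n_i$ guarantees the vertices of $M_i$ actually used do not exceed $n_i$. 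All coefficients are bounded by $n$ (recall $k\le n$), and there are at most $2^d(d+1)$ variables — a quantity depending on $d$ only.

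Finally I would iterate $\sigma$ over $\{0,1,\dots,n\}$ (or binary search, or, for the decision version with dissatisfaction bound $b$, simply take $\sigma=n-b$), decide feasibility of the above integer program by Lenstra's algorithm, which runs in time $f(d)\cdot\mathrm{poly}(n)$ since its dimension is bounded in terms of $d$, and return the largest feasible $\sigma$ together with the allocation recovered from the corresponding solution. The step I expect to be the main obstacle is the structural one: verifying that the quotient digraph is well defined, that $\succc$ behaves module-wise, that optimal touch-sets are antichains of $Q$ (so that, in particular, one may not assume each agent receives at most one vertex of each module, yet the satisfaction formula still holds), and that the aggregated per-type constraints faithfully encode the existence of a genuine allocation.
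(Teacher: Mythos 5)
Your proposal is correct and follows essentially the same route as the paper: both aggregate agents by the antichain of modules they touch (your antichains of the quotient $Q$ are exactly the paper's ``assignable sets''), both set up, for each satisfaction/dissatisfaction threshold, an integer program in counting variables (your $x_A, z_{A,i}$ are the paper's $x_S, y_{S,I}$, and your $\mathrm{base}(A)$ is $n-d_S-|\bigcup_{I\in S}I|$), and both conclude via Lenstra's algorithm in fixed dimension. The only difference is presentational: the paper first guesses the set of nonempty types so that it may divide by $x_S\ge 1$, whereas you write the per-type constraint directly as a linear inequality that remains valid when $x_A=0$, avoiding that outer enumeration.
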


\begin{proof}
Let $\mathcal{I}$ be the family of independent sets obtained by the partition of $V(G)$ into the minimum number of independent set modules.
A nonempty set $S \subseteq \mathcal{I}$ is said to be  \emph{assignable} if it holds for all pairs $I, I' \in S$, $I\neq I'$, that there is no directed path in $G$ from a vertex in $I$ to a vertex in $I'$; in other words, if the set $\bigcup_{I \in S} I$ is a nonempty antichain in~$G$.
We denote by $\mathcal{A} \subseteq 2^{\mathcal{I}}$ the set of assignable sets.
Note that $|\mathcal{A}| \leq 2^{d}$.
For an assignable set $S \in \mathcal{A}$ we call an agent an \emph{$S$-agent} if it is only assigned vertices from $\bigcup_{I \in S} I$ and for each independent set $I \in S$ at least one vertex from $I$ is assigned to the agent. 
Recall that by \Cref{structure-of-optimal-solutions} we may restrict our attention to allocations in which the set of vertices allocated to each agent is a nonempty antichain.
For any such allocation, each agent is an $S$-agent where $S$ is the set of independent sets containing vertices assigned to the agent, and is not an $S'$-agent for any set $S'\in \mathcal{A}\setminus\{S\}$.

As a first step we guess a subset $\mathcal{A}' \subseteq \mathcal{A}$ of assignable sets.
This guess corresponds to the case when for each $S \in \mathcal{A}'$ at least one $S$-agent exists and for all $S \in \mathcal{A} \setminus \mathcal{A}'$ no $S$-agent exists.
We then iterate over all dissatisfaction thresholds $t$ and check if a feasible allocation to all $k$ agents exists that is consistent with the guess $\mathcal{A}'$ and gives a dissatisfaction at most $\delta$ for all agents.
Clearly, for the optimal dissatisfaction value also, such an allocation must exist for some guess of assignable sets.
We introduce an integer programming model that performs exactly such a check for a given $\mathcal{A}'$ and $\delta$.

\medskip
For every $S \in \mathcal{A}'$ let $d_{S}$ be the number of vertices of $G$ not dominated by any 
$v \in \bigcup_{I \in S} I$, i.e., the guaranteed dissatisfaction of every $S$-agent.
We introduce integer variables $x_{S}$ for each $S \in \mathcal{A}'$ that count how many agents are $S$-agents. 
In addition, we introduce integer variables $y_{S,I}$ for each $S \in \mathcal{A}'$ and $I \in S$ that count how many vertices of $I$ are assigned to $S$-agents.
Expressed in terms of these variables, the dissatisfaction of every $S$-agent is at most
\begin{equation}\label{eq:bound_proof}
        d_S + \left|\bigcup_{I \in S} I \right| - \left\lfloor \sum_{I \in S} \frac{y_{S, I}}{x_S} \right\rfloor ,
\end{equation}
since the $\sum_{I \in S} y_{S,I}$ vertices out of the independent sets in $S$ assigned to the $S$-agents can be distributed evenly among them.
The upper bound in (\ref{eq:bound_proof}) must be smaller or equal to the given threshold value $\delta$.
Note that by integrality of $\delta$
the floor function in (\ref{eq:bound_proof}) can be dropped in this condition, yielding
\begin{equation}\label{eq:t-bound_proof}
 \delta \geq d_S + \left|\bigcup_{I \in S} I \right| - \sum_{I \in S} \frac{y_{S, I}}{x_S}\,. 
\end{equation}
We multiply (\ref{eq:t-bound_proof}) by $x_{S} \geq 1$ to obtain the following equivalent linear inequality:
\begin{equation*}
     \delta\, x_{S} \geq d_S\, x_S + \left|\bigcup_{I \in S} I \right| x_S - \sum_{I \in S} y_{S, I}
\end{equation*}
Hence, the following integer programming decision formulation checks if the dissatisfaction threshold $t$ can be reached by an allocation to $k$ agents that is consistent with our guess $\mathcal{A}'$.
\begin{align}
     \delta\, x_{S} &\geq d_S\, x_S + \left|\bigcup_{I \in S} I \right| x_S - \sum_{I \in S} y_{S, I} & \forall S \in \mathcal{A}'\label{eq:t-bound-mult_proof}\\
     \sum_{S\in \mathcal{A}'} x_S &= k & \\
\sum_{S \in \mathcal{A}'\colon\, I \in S} y_{S,I} &\leq |I| & \forall I \in \mathcal{I}\\
     x_{S} &\geq 1 & \forall S \in \mathcal{A}'\\
     y_{S,I} &\geq x_{S} & \forall S \in \mathcal{A}', I \in S\\
    x_{S}, y_{S,I} &\in \mathbb{N} & \forall S \in \mathcal{A}', I \in S\label{eq:binary}
\end{align}
Note that the nonzero coefficients of this integer program are either constant ($\pm 1$) or polynomial in the input size (namely $\delta$, $d_S$, $\left|\bigcup_{I \in S} I \right|$, and $k$), and the number of variables and constraints is bounded by a function of~$d$.
Hence, using Lenstra's algorithm~\cite{lenstra1983integer} the solution of this integer program is fixed parameter tractable with respect to parameter~$d$.

Taking over all guesses $\mathcal{A}'$ the minimum value of $\delta$ that allows a feasible solution of (\ref{eq:t-bound-mult_proof})--(\ref{eq:binary}) gives an allocation with optimal dissatisfaction value. Thus, {\MINMAX} is fixed parameter tractable with respect to~$d$.
\end{proof}

\section{Conclusions}
\label{sec:conc}

In this paper we study the fair allocation of items to agents, when preferences are expressed by a mutually agreed preference graph. 
Fairness is represented by the maximum dissatisfaction experienced by any agent.
The setting with individual preferences was treated in a related paper~\cite{general}.
For general graphs, {\MINMAX} is efficiently solvable when we are dealing with $2$ agents and turns out to be \NP-complete for $k\geq 3$ agents.
However, for $k\geq 3$ agents, when we restrict the preference graph to out-stars, the problem can be solved in polynomial time.
Polynomiality can be shown also for out-forests, if the number of agents $k$ is a constant.
For graphs allowing certain bounded modular partitions, fixed parameter tractability results can be derived.

\medskip
As an interesting open problem for {\MINMAX} and an arbitrary number of agents, we suggest the case where $G$ consists only of out-paths.
W.l.o.g.\ paths can be restricted to have $k$ vertices. 
Note that the problem is trivial for shorter paths and in the case of two out-paths, where we can just assign the items to agents in increasing, respectively decreasing, order on the two paths to reach an optimal solution for the optimization version of {\MINMAX}.

For the special case of three out-paths each of $k$ vertices, an optimal solution can be given as shown in Figure~\ref{fig:three-paths}.
Note that one has to distinguish the cases of $k$ being even or odd, and that the allocation described there satisfies the lower bound $3 \cdot (k-1)/2$.
However, {\MINMAX} remains open for three out-paths of arbitrary length.

\begin{figure}
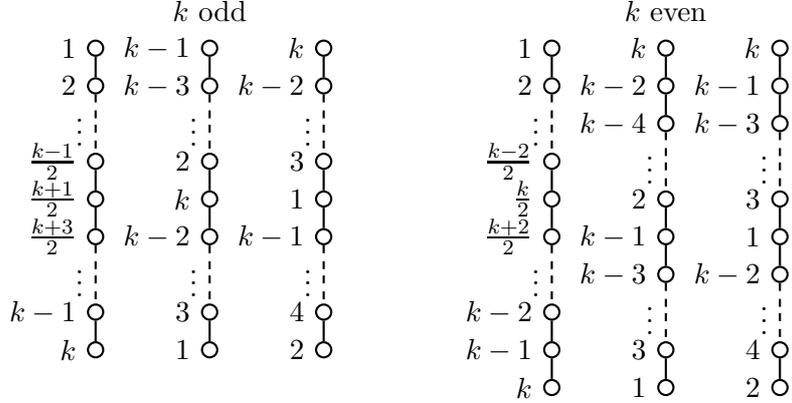

    \centering
\picturepathsamelength
\caption{If paths have length (at least) $k$, then the natural lower bound $3(k-1)/2$ obtained by 
summing up all dissatisfaction values over all vertices, which gives $3k(k-1)/2$, and dividing this value by $k$,
can be obtained by the above construction based on the parity of~$k$.
\label{fig:three-paths}}
\end{figure}

Considering a more difficult problem, it would be interesting to establish the complexity of {\MINMAX} for general out-forests with an arbitrary number of agents~$k$. 
This would close the gap between Theorem~\ref{th:minmax-out-stars} (out-stars for arbitrary $k$) and Theorem~\ref{th:same-tree-const-MINMAX} (out-forest for constant $k$).

Recall that {\MINMAX} can be solved in polynomial time if the graph has width~$2$ (see \Cref{th:width-two-MINMAX-MINSUM}).
In a more general direction, we point to the question of the complexity of {\MINMAX} on preference graphs of bounded width, which were recently studied in a number of papers~\cite{MR4459341,MR4402362,MR4364288,MR3473350,MR3951102}. 
Note that if both the width $w$ and the number of agents $k$ is bounded, it is easy to obtain in linear time a kernel of size $k\cdot w$ by reducing the preference graph to the top $k$ levels, implying fixed parameter tractability in~$k+w$.

\medskip
A different objective would consider the sum of dissatisfaction values over all agents, instead of the maximum.
The resulting minimization follows a utilitarian approach and maximizes the overall efficiency of the allocation.
This setting was already introduced in~\cite{adt2021} and will be subject of our future work.

\subsubsection*{Acknowledgements.}

The work of this paper was done in the framework of a bilateral project between University of Graz and University of Primorska, financed by the OeAD (SI 13/2023) and the Slovenian Research and Innovation Agency (BI-AT/23-24-009).
The authors acknowledge partial support by the Slovenian Research and Innovation Agency (I0-0035, research programs P1-0285 and P1-0404, and research projects N1-0102, N1-0160, N1-0210, J1-3001, J1-3002, J1-3003, J1-4008, and J1-4084), by the research program CogniCom (0013103) at the University of Primorska, and by the Field of Excellence ``COLIBRI'' at the University of Graz.


\bibliographystyle{plain}
\bibliography{refs}

\end{document}